\documentclass[journal]{IEEEtran}

\usepackage[utf8]{inputenc}
\usepackage[T1]{fontenc}
\usepackage{amsmath,amssymb,amsthm}
\usepackage{mathtools}
\usepackage{bm}
\usepackage{graphicx}
\usepackage{booktabs}
\usepackage{enumitem}
\usepackage[colorlinks=true,linkcolor=blue,citecolor=blue,urlcolor=blue]{hyperref}
\usepackage{cleveref}

\theoremstyle{plain}      
\newtheorem{theorem}{Theorem}
\newtheorem{lemma}{Lemma}
\newtheorem{proposition}{Proposition}
\newtheorem{corollary}{Corollary}
\theoremstyle{definition}  
\newtheorem{definition}{Definition}

\theoremstyle{remark}      
\newtheorem{remark}{Remark}

\newcommand{\E}{\mathbb{E}}
\newcommand{\Var}{\mathbb{V}}
\newcommand{\Prob}{\mathbb{P}}
\newcommand{\R}{\mathbb{R}}

\newcommand{\SINR}{\mathrm{SINR}}

\newcommand{\ergcap}{\mathsf{C}}
\newcommand{\hsp}[3]{\hspace{#1pt}#3\hspace{#2pt}}
\newlength{\myfigwidth}
\DeclareMathOperator{\InvGamma}{Inv\text{-}Gamma}

\title{Multi-User Diversity Scaling in Heavy-Tailed Fading}
\author{Yonathan~Murin, Ali~\"{O}zer~Ercan, Nariman Farsad%
\thanks{The authors are with Reality Labs, Meta Platforms, Inc (e-mail: moriny@meta.com; aliercan@meta.com; nfarsad@meta.com).}
\thanks{This work has been submitted to the IEEE for possible publication. Copyright may be transferred without notice, after which this version may no longer be accessible.}\vspace{-4mm}}
\date{}

\begin{document}
\maketitle

\begin{abstract}
Classical multi-user diversity theory predicts that throughput over Rayleigh fading channels grows as $\log_2\!\log_2 K$.
In this work, we demonstrate a fundamental shift in this scaling law under heavy-tailed composite fading.
Specifically, under Fisher--Snedecor $\mathcal{F}$ composite fading, the channel power acquires a regularly varying upper tail,
shifting extreme-value statistics from the Gumbel to the Fr\'echet domain.
We prove that the maximum SINR among $K$ users scales polynomially as $K^{1/m_s}$, where $m_s$ is the shadowing severity parameter,
leading to an ergodic capacity scaling of $\frac{1}{m_s}\log_2 K$. Crucially, this scaling persists in interference-limited Poisson networks,
where aggregate co-channel interference alters the scaling constant but not the exponent.
This polynomial gain is most relevant in severe-to-moderate shadowing ($m_s \le 3$), as encountered in
body-area networks, vehicular/industrial IoT, and dense indoor environments, where Fr\'echet asymptotics
overtake industry-standard lognormal models at practical user counts.
Finally, we establish the conditions necessary to harvest this gain (showing that proportional-fair scheduling under quasi-static shadowing reverts to Gumbel scaling)
and validate all analytical findings through Monte Carlo simulations, including MIMO random beamforming.
\end{abstract}

\begin{IEEEkeywords}
Composite fading, extreme value theory, heavy-tailed distributions, multi-user diversity, stochastic geometry.
\end{IEEEkeywords}


\vspace{-3mm}
\section{Introduction}\label{sec:intro}

Multi-user diversity allows a system to increase throughput without additional bandwidth or power by scheduling the strongest of $K$ users.
Quantifying this growth is essential for designing dense networks. The theoretical foundations of multi-user diversity and opportunistic scheduling were established in~\cite{Knopp1995, Viswanath2002}.
Subsequent works expanded on this by providing comprehensive capacity analyses~\cite{Alouini1999} and evaluating feedback requirements~\cite{Gesbert2004}.

Classical theory predicts a slow, double-logarithmic capacity growth for the best-user scheduler~\cite{Viswanath2002}.
This restrictive scaling applies to all standard multi-path fading models~\cite{AlBadarneh2018} and persists under MIMO random beamforming~\cite{Sharif2005,Sharif2007},
as their \emph{exponential} tail decay confines the extreme-value statistics to the Gumbel domain~\cite{Leadbetter1983,DeHaan2006,Embrechts1997}.
Importantly, extending these models to \emph{composite} fading (multi-path plus large-scale shadowing)~\cite{Suzuki1977} does not alter this baseline.
Conventional composite models, such as generalized-$\mathcal{K}$~\cite{AlAhmadi2010} and 3GPP log-normal shadowing,
possess exponential or sub-exponential tails, leaving the system in the conservative Gumbel domain~\cite{DeHaan2006,Embrechts1997}.

In contrast, the Fisher--Snedecor $\mathcal{F}$ composite fading model~\cite{Yoo2017,Badarneh2018}, which combines Nakagami-$m$ multi-path
with \emph{inverse-Gamma} shadowing, introduces a \emph{power-law} upper tail.
Empirically validated for severe indoor and body-area networks~\cite{Cotton2014,Yoo2019}, this heavy tail fundamentally shifts the extreme-value
statistics into the \emph{Fr\'{e}chet} domain. Unlike the logarithmic growth of the Gumbel domain, the maximum of $K$ i.i.d. samples from an
$\mathcal{F}$ distribution grows \emph{polynomially} as $K^{1/m_s}$, where the tail index is governed entirely by the shadowing parameter
$m_s$~\cite{DeHaan2006,Resnick2007,Leadbetter1983,Embrechts1997}.
Characterizing this fundamentally different multi-user diversity scaling law provides the primary motivation for this work.

A complementary line of research in stochastic geometry models co-channel interferers as a Poisson point process (PPP),
yielding aggregate interference with a totally skewed, $\alpha$-stable power-law tail~\cite{Win2009,Baccelli2009,Haenggi2009,Haenggi2012,Andrews2011,ElSawy2017}.
While this framework has been extensively applied to complex network architectures and SINR meta-distributions~\cite{Dhillon2012,Schilcher2016},
prior studies on interference-limited multi-user diversity still assume traditional light-tailed fading for the desired link.
Consequently, the explicit interaction between a heavy-tailed $\mathcal{F}$-fading \emph{signal} and heavy-tailed PPP \emph{interference},
and its impact on diversity scaling, remains unexplored.
Bridging this gap requires more than a direct application of textbook extreme-value theory. 
First, verifying that Breiman's lemma applies when the denominator is itself heavy-tailed ($\alpha$-stable interference) demands a careful analysis 
of the near-zero behavior of the interference distribution. 
Second, our proportional-fair scheduling analysis reveals the surprising finding that the polynomial gain is entirely neutralized under quasi-static shadowing.

In this work we make the following contributions:
\paragraph{Asymptotic Scaling Laws} We prove that under Fisher--Snedecor $\mathcal{F}$ fading, the maximum SINR scales as $K^{1/m_s}$,
shifting ergodic capacity growth from double-logarithmic to the polynomial $\frac{1}{m_s}\log_2 K$ regardless of the noise or interference regime (\Cref{thm:general-scaling}).

\paragraph{Invariance to Poisson Interference} We establish that aggregate interference from a PPP affects only the scaling constant, not the exponent,
as the SINR upper tail remains dominated by the composite fading (\Cref{thm:sinr-tail}).

\paragraph{Unified Analytical Framework} We derive closed-form expressions for outage probability and second-order statistics,
while providing a systematic EVT-based classification of composite fading models (\Cref{thm:outage}, \Cref{prop:second-order}, \Cref{tab:scaling-comparison}).

\paragraph{Crossover Analysis} We quantify the "crossover point" where $\mathcal{F}$-fading capacity growth overtakes industry-standard log-normal models,
validated through slope-based comparisons and Monte Carlo simulations (\Cref{eq:K-slope-cross}, \Cref{tab:crossovers}).

\paragraph{Scheduling Design Criterion} We prove that proportional-fair scheduling under quasi-static shadowing
reverts to Gumbel scaling (\Cref{prop:pf-static}), establishing a precise condition for harvesting the polynomial gain:
shadowing must decorrelate within the scheduling window.

The remainder of this paper is organized as follows: \Cref{sec:system} introduces the system model and preliminaries;
\Cref{sec:results} proves the main scaling results; \Cref{sec:numerical,sec:discussion} provide numerical validation and practical implications;
and \Cref{sec:conclusion} concludes the paper.

\vspace{-3mm}
\section{System Model and Preliminaries}\label{sec:system}

\vspace{-1mm}
\subsection{Notation}
We denote probability and expectation by $\Prob[\cdot]$ and $\E[\cdot]$, respectively.
The survival (complementary CDF) function of a random variable~$X$ is $\bar{F}_X(x) = \Prob[X > x]$.
We write $f(x) \sim g(x)$ as $x \to \infty$ to denote asymptotic equivalence, namely
$\lim_{x\to\infty} f(x)/g(x) = 1$. We write $\xrightarrow{d}$ for convergence
in distribution. The Gamma and Beta functions are denoted by $\Gamma(\cdot)$ and $B(\cdot,\cdot)$.
Vectors are denoted by bold-face letters (e.g. $\mathbf{x}$), where $\|\cdot\|$ is the Euclidean norm.
The distribution $\mathcal{CN}(0, \sigma^2)$ denotes circularly symmetric complex Gaussian with zero mean and variance~$\sigma^2$.

\vspace{-3mm}
\subsection{Network Architecture and Received Signal}\label{ssec:system-model}

We consider a single-cell downlink where a base station (BS) at location $\mathbf{u}_0$ serves $K$ users at locations $\mathbf{u}_k$ indexed by $k = 1, \ldots, K$.
The system operates over bandwidth $W$ in a time-slotted fashion using a shared frequency band.

\paragraph{Signal and Channel Models}
The BS transmits symbol $x$ (with $\E[|x|^2] \hsp{-2}{-2}{=} 1$) with power $P$. 
The received signal at user $k$ is $y_k \hsp{-2}{-2}{=} \sqrt{P} r_k^{-\eta/2} \tilde{h}_k x + n_k$,
where $\tilde{h}_k$ is the complex channel coefficient from the BS to user~$k$, 
$r_k \hsp{-2}{-2}{=} \|\mathbf{u}_0 \hsp{-1}{-1}{-} \mathbf{u}_k\|$, $\eta \hsp{-2}{-2}{>} 2$ is the path-loss exponent, and $n_k \hsp{-2}{-2}{\sim} \mathcal{CN}(0, \sigma^2)$ is
additive white Gaussian noise (AWGN) with $\sigma^2 \hsp{-2}{-2}{=} N_0 W$, and $N_0$ being the noise power spectral density.
The received signal power is $S_k \hsp{-2}{-2}{=} P r_k^{-\eta} h_k$,
where the composite fading gain $h_k \hsp{-2}{-2}{=} |\tilde{h}_k|^2$ is decomposed into independent components
$h_k \hsp{-2}{-2}{=} g_k \cdot s_k$~\cite{Suzuki1977,Yoo2017}:
\begin{itemize}
    \item $g_k \hsp{-2}{-2}{\sim} \Gamma(m, 1/m)$ is the Nakagami-$m$ small-scale fading with shape $m \hsp{-2}{-2}{\ge} 1$ and $\E[g_k] \hsp{-2}{-2}{=} 1$.
    \item $s_k \hsp{-2}{-2}{\sim} \InvGamma(m_s, m_s - 1)$ is the large-scale shadowing with shape $m_s \hsp{-2}{-2}{>} 1$ and $\E[s_k] \hsp{-2}{-2}{=} 1$.
\end{itemize}
The product $h_k$ follows a scaled Fisher--Snedecor $\mathcal{F}(2m, 2m_s)$ distribution~\cite{Yoo2017,Badarneh2018}.
Importantly, it exhibits a \emph{power-law upper tail}: $\Prob[h_k > x] \hsp{-2}{-2}{\sim} C_h x^{-m_s}$ as $x \hsp{-2}{-2}{\to} \infty$ (see \Cref{lem:F-tail}),
unlike the exponential tail $\Prob[g_k > x] \hsp{-2}{-2}{\sim} e^{-mx}$ of Nakagami-$m$ fading.

\paragraph{Co-channel Interference}
Co-channel interferers are modeled as a homogeneous PPP $\Phi \hsp{-2}{-2}{\subset} \R^2$ with intensity $\lambda$.
Each interferer $j \hsp{-2}{-2}{\in} \Phi$ at $\mathbf{x}_j$ transmits with the same power $P$ as the serving BS.
The aggregate interference power is $I_k \hsp{-2}{-2}{=} \sum_{j \in \Phi} P \|\mathbf{x}_j - \mathbf{u}_k\|^{-\eta} \tilde{g}_j$, 
where $\tilde{g}_j \hsp{-2}{-2}{\sim} \mathrm{Exp}(1)$~\cite{Win2009}.
Following~\cite{Win2009,Baccelli2009}, $I_k$ follows a totally skewed $\alpha$-stable distribution with stability index
$\alpha_I \hsp{-2}{-2}{=} 2/\eta$ and dispersion $\sigma_I^{\alpha_I} \hsp{-2}{-2}{=} \lambda \pi P^{\alpha_I} \,\E[\tilde{g}_j^{\alpha_I}] / \alpha_I$.
The fading $h_k$ and aggregate interference $I_k$ are independent since they arise from disjoint transmitter sets.

\paragraph{SINR and Scheduling}
The SINR of user $k$ is $\gamma_k \hsp{-2}{-2}{=} \frac{P r_k^{-\eta} h_k}{I_k + \sigma^2}$.
We consider both the interference-limited ($I_k \hsp{-2}{-2}{\gg} \sigma^2$) and noise-limited ($I_k \hsp{-2}{-2}{=} 0$) regimes.
The BS employs an opportunistic scheduler selecting $k^\star \hsp{-2}{-2}{=} \arg\max_{1 \le k \le K} \gamma_k$.
We analyze the maximum SINR $\gamma_{\max}^{(K)} \hsp{-2}{-2}{=} \max_{k} \gamma_k$ and the ergodic capacity:
\begin{equation}\label{eq:ergodic-cap}
    \ergcap(K) = \E\!\bigl[\log_2\!\bigl(1 + \gamma_{\max}^{(K)}\bigr)\bigr],
\end{equation}
as $K \hsp{-2}{-2}{\to} \infty$.

\vspace{-4mm}
\subsection{Mathematical Preliminaries}
\label{ssec:preliminaries}

The multi-user diversity scaling is determined by the \emph{tail behavior} of the per-user SINR distribution.
We recall three classical results that formalize this connection.
A distribution whose survival function decays as a power law is captured by the notion of \emph{regular variation}:

\begin{definition}[Regular variation~{\cite{Embrechts1997,Resnick2007}}]
\label{def:rv}
A measurable function $f \hsp{-2}{-2}{:} (0,\infty) \hsp{-2}{-2}{\to} (0,\infty)$ is \emph{regularly
varying} with index $\rho \hsp{-2}{-2}{\in} \R$, written $f \hsp{-2}{-2}{\in} \mathrm{RV}_\rho$, if
$f(tx)/f(t) \hsp{-2}{-2}{\to} x^\rho$ as $t \hsp{-2}{-2}{\to} \infty$ for every $x > 0$.  A survival
function $\bar{F} \in \mathrm{RV}_{-\alpha}$ with $\alpha > 0$ is said to
have a \emph{power-law tail} with index~$\alpha$.
\end{definition}

The Fisher--Tippett--Gnedenko theorem then dictates how the maximum of
$K$ i.i.d.\ samples behaves, depending on whether the parent distribution
has an exponential or power-law tail:

\begin{theorem}[Fisher--Tippett--Gnedenko~{\cite{DeHaan2006}}]
\label{thm:frechet}
  Let $X_1, X_2, \ldots, X_K$ be i.i.d.\ with survival function
  $\bar{F}(x) \hsp{-1}{-1}{\sim} A\, x^{-\alpha}$ for $\alpha \hsp{-1}{-1}{>} 0$, $A \hsp{-1}{-1}{>} 0$.
  Then, as the number of samples $K \hsp{-1}{-1}{\to} \infty$,
  \begin{equation}\label{eq:frechet-limit}
    \frac{\max_{1 \le k \le K} X_k}{a_K}
    \;\xrightarrow{d}\; \Phi_\alpha,
    \quad
    a_K = (AK)^{1/\alpha},
  \end{equation}
  where $\Phi_\alpha(x) \hsp{-1}{-1}{=} \exp(-x^{-\alpha})$ for $x \hsp{-1}{-1}{>} 0$ is the Fr\'echet distribution.
  In particular, $\max_{1 \le k \le K} X_k \hsp{-1}{-1}{\sim} (AK)^{1/\alpha}$ in probability - a polynomial rate.
  If instead $\bar{F}$ decays exponentially (e.g., Rayleigh, Nakagami-$m$), the limit is the Gumbel distribution,
  and $a_K$ grows only as $\log K$.
\end{theorem}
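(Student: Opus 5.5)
The plan is to compute the distribution function of the normalized maximum directly and take the limit. Independence is what makes the maximum tractable: for any fixed $x>0$ and all $K$ large enough that $a_K x$ is large,
\begin{equation*}
\Prob\!\Bigl[\max_{1\le k\le K} X_k \le a_K x\Bigr] = \bigl(1-\bar F(a_K x)\bigr)^K .
\end{equation*}
The argument then reduces to showing $K\bar F(a_K x)\to x^{-\alpha}$ and using the elementary limit $(1-c_K/K)^K\to e^{-c}$ whenever $c_K\to c$.

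\textbf{Step 1 (tail at the normalizing level).} Since $a_K=(AK)^{1/\alpha}\to\infty$, the hypothesis $\bar F(y)\sim A y^{-\alpha}$ gives, for fixed $x>0$,
\begin{equation*}
K\bar F(a_K x)=K\,A\,(a_K x)^{-\alpha}\,(1+o(1)) = K A\,(AK)^{-1}x^{-\alpha}(1+o(1)) \to x^{-\alpha}.
\end{equation*}
Equivalently, this follows from regular variation, $\bar F\in\mathrm{RV}_{-\alpha}$ (\Cref{def:rv}), together with $K\bar F(a_K)\to 1$.

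\textbf{Step 2 (limit of the power).} Write $\bar F(a_Kx)=c_K/K$ with $c_K\to x^{-\alpha}$. Because $\log(1-u)=-u+O(u^2)$ and $u=c_K/K\to 0$, we get $K\log(1-c_K/K)\to -x^{-\alpha}$. Hence the CDF of $\max_k X_k/a_K$ converges to $\exp(-x^{-\alpha})=\Phi_\alpha(x)$ at every $x>0$. For $x\le 0$ the probability is at most $\Prob[X_1\le 0]^K$. This is $0$ whenever $\Prob[X_1\le 0]<1$, which holds here since $\bar F(x)>0$ for all large $x$. Since $\Phi_\alpha$ is continuous, pointwise convergence of CDFs at all $x$ gives convergence in distribution.

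\textbf{Step 3 (consequences).} The claim $\max_k X_k/a_K\to$ a nondegenerate, a.s.\ finite and positive limit shows the maximum is of exact order $(AK)^{1/\alpha}$ in probability, i.e.\ $\max_k X_k=\Theta_P\bigl((AK)^{1/\alpha}\bigr)$. Note that the Fréchet limit is not a point mass, so "$\sim$ in probability" should be read as this stochastic-order statement rather than as ratio $\to 1$. For the Gumbel remark I would not reprove the full trichotomy. I would cite \cite{DeHaan2006} and verify only the representative exponential case. Take $\bar F(x)\sim Be^{-mx}$ and set $b_K=\log(BK)/m$. Then $K\bar F(b_K+y/m)\to e^{-y}$, and the same argument as in Step 2 gives the limit $\exp(-e^{-y})$. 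So the location grows like $\log K$, with $O(1)$ scale.

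The only delicate point is Step 2's uniformity: we must replace $\bar F(a_Kx)$ by its asymptotic form inside a $K$-th power. This is handled because the error is multiplicative, $(1+o(1))$, on a quantity of order $1/K$. As a result it perturbs $K\log(1-\cdot)$ only by $o(1)$; no further obstacle is expected.
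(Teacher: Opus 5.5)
Your proof is correct and is exactly the standard argument behind the cited result (the paper gives no proof of its own beyond the reference to de Haan--Ferreira): show $K\bar F(a_K x)\to x^{-\alpha}$, pass to $(1-c_K/K)^K\to e^{-c}$, and handle the exponential case the same way via $K\bar F(b_K+y/m)\to e^{-y}$. Your caveat on the ``in particular'' clause is also right: because $\Phi_\alpha$ is nondegenerate, $\max_k X_k/(AK)^{1/\alpha}$ does not converge to $1$ in probability, so that clause holds only as $\max_k X_k=\Theta_P\bigl((AK)^{1/\alpha}\bigr)$ or as $\log\max_k X_k/\log K\to 1/\alpha$ in probability. The only small slips are that $\Prob[X_1\le 0]^K$ tends to $0$ rather than equals $0$, and that for Nakagami-$m$ with $m>1$ the tail carries an extra $x^{m-1}$ factor, so $b_K$ should be defined by $K\bar F(b_K)\to 1$ (still $b_K\sim\log K/m$).
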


The distinction between these two regimes, Gumbel (exponential tail, $\log K$ scaling of the maximum)
versus Fr\'echet (power-law tail, $K^{1/\alpha}$ scaling), is the mathematical mechanism that drives the results of this paper.

Finally, since the SINR can be written as a \emph{product} $\gamma_k \hsp{-1}{-1}{=} h_k \cdot P r_k^{-\eta} / I_k$
of the heavy-tailed fading power~$h_k$ and a lighter-tailed factor, we need a tool that describes how
regular variation propagates through such products. This is done via Breiman's lemma:

\begin{lemma}[Breiman's lemma~{\cite{Breiman1965}}]\label{lem:breiman}
  Let $X$ and $Y$ be independent non-negative random variables (RVs) with $\bar{F}_X \hsp{-1}{-1}{\in} \mathrm{RV}_{-\alpha}$ and
  $\E[Y^{\alpha + \epsilon}] \hsp{-1}{-1}{<} \infty$ for some $\epsilon \hsp{-1}{-1}{>} 0$.
  Then $\bar{F}_{XY}(x) \hsp{-1}{-1}{\sim} \E[Y^\alpha]\,\bar{F}_X(x)$ as $x \hsp{-1}{-1}{\to} \infty$.
  That is, $XY$ inherits the tail index of~$X$ and the lighter-tailed factor~$Y$ affects only the tail constant, not the exponent.
\end{lemma}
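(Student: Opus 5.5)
The plan is the classical truncation argument: split the event $\{XY>x\}$ according to whether $Y$ is small or large relative to $x$. Control the bounded-$Y$ part with the uniform convergence theorem for regularly varying functions, and the large-$Y$ part with Markov's inequality using the extra moment $\E[Y^{\alpha+\epsilon}]<\infty$.

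Fix $M>0$ and write
\[
\Prob[XY>x]=\Prob[XY>x,\,Y\le M]+\Prob[XY>x,\,Y>M].
\]
Since $X$ and $Y$ are independent, the first term equals $\int_{(0,M]}\bar F_X(x/y)\,dF_Y(y)$; the atom $Y=0$ contributes nothing. For each fixed $y>0$, regular variation gives $\bar F_X(x/y)/\bar F_X(x)\to y^{\alpha}$. To pass the limit under the integral, I would use the uniform convergence theorem for $\mathrm{RV}_{-\alpha}$ functions on compact sets bounded away from zero, together with monotonicity of $\bar F_X$ near $y\to0$. Concretely, $\bar F_X(x/y)\le\bar F_X(x/\delta)$ for $y\le\delta$, so that piece is at most $(\delta^\alpha+o(1))\bar F_X(x)$. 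This yields
\[
\frac{\Prob[XY>x,\,Y\le M]}{\bar F_X(x)}\to\E[Y^\alpha\mathbf 1\{Y\le M\}].
\]

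The second term is where the hypothesis $\E[Y^{\alpha+\epsilon}]<\infty$ is needed, and I expect it to be the main obstacle. It splits into two pieces.
\begin{itemize}
\item On $\{M<Y\le x/x_0\}$, with $x/y\ge x_0$, Potter's bounds give $\bar F_X(x/y)/\bar F_X(x)\le C\,y^{\alpha+\epsilon}$ uniformly for $y\ge1$ and $x/y\ge x_0$. The contribution is therefore at most $C\,\E[Y^{\alpha+\epsilon}\mathbf 1\{Y>M\}]$, which tends to $0$ as $M\to\infty$ by dominated convergence.
\item On $\{Y>x/x_0\}$, bound by $\Prob[Y>x/x_0]\le x_0^{\alpha+\epsilon}\E[Y^{\alpha+\epsilon}]\,x^{-\alpha-\epsilon}$ via Markov's inequality. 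This is $o(\bar F_X(x))$ because $\bar F_X(x)\ge x^{-\alpha-\epsilon/2}$ eventually, by regular variation.
\end{itemize}

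Finally, let $x\to\infty$ and then $M\to\infty$. Monotone convergence sends $\E[Y^\alpha\mathbf 1\{Y\le M\}]$ to $\E[Y^\alpha]$, which is finite since $\E[Y^{\alpha+\epsilon}]<\infty$. The limsup and liminf of $\Prob[XY>x]/\bar F_X(x)$ then coincide at $\E[Y^\alpha]$, and the statement follows. In the paper's setting $X=h_k$ and $Y=Pr_k^{-\eta}/(I_k+\sigma^2)$, so the moment condition on $Y$ amounts to controlling negative moments of $I_k$ near zero. That issue is deferred to the later theorems and is not part of this lemma.
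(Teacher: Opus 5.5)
Your proof is correct and complete. It is the standard truncation argument for Breiman's lemma:
\begin{itemize}
\item uniform convergence on $[\delta,M]$, plus monotonicity of $\bar F_X$ for $y\le\delta$;
\item Potter's bounds on $\{M<Y\le x/x_0\}$;
\item Markov's inequality on $\{Y>x/x_0\}$, beaten by $\bar F_X(x)\ge x^{-\alpha-\epsilon/2}$;
\item finally $x\to\infty$, then $M\to\infty$.
\end{itemize}

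The paper gives no proof of its own; it states the lemma as a cited preliminary from Breiman (1965), so there is no alternative route to compare against.

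One optional simplification: the $\{Y\le M\}$ term can be handled by dominated convergence alone. For $y\le M$ you have $\bar F_X(x/y)/\bar F_X(x)\le \bar F_X(x/M)/\bar F_X(x)\to M^\alpha$, so the ratio is bounded by a constant for large $x$, and you do not need the uniform convergence theorem there.
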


The analysis proceeds by (i)~establishing the regular variation of the $\mathcal{F}$-distributed channel power (\Cref{lem:F-tail}),
(ii)~proving the SINR inherits this tail under interference (\Cref{thm:sinr-tail}),
and (iii)~applying the Fr\'{e}chet theorem to derive the scaling of $\gamma_{\max}^{(K)}$ and $C(K)$ (\Cref{thm:general-scaling}).

\vspace{-1.5mm}
\section{Main Results}\label{sec:results}

\subsection{The Source of Heavy Tails: $\mathcal{F}$-Distributed Fading}

We start with showing that the composite fading power $h_k$ has a regularly varying tail.
This property is what ultimately drives the polynomial diversity scaling.
The following Lemma formalizes this result:

\begin{lemma}[Upper tail of the $\mathcal{F}$ distribution]\label{lem:F-tail}
  Let $h \hsp{-1}{-1}{=} g \cdot s$ where $g \hsp{-1}{-1}{\sim} \Gamma(m, 1/m)$ and
  $s \hsp{-1}{-1}{\sim} \InvGamma(m_s, m_s - 1)$ with $m_s \hsp{-1}{-1}{>} 1$.  Then, as $x \hsp{-1}{-1}{\to} \infty$,
  \begin{equation}\label{eq:lem-F-tail}
    \Prob[h \hsp{-1}{-1}{>} x] \hsp{-1}{-1}{\sim}
    \frac{\Gamma(m + m_s)}{\Gamma(m)\,\Gamma(m_s)\,m_s}
    \left(\frac{m_s - 1}{m}\right)^{m_s}
    x^{-m_s}.
  \end{equation}
  That is, $\bar{F}_h \in \mathrm{RV}_{-m_s}$.
\end{lemma}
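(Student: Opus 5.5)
Our approach is to first find the tail of the shadowing factor $s$ alone and then transfer it to the product $h = g\cdot s$ using Breiman's lemma (\Cref{lem:breiman}), with $X = s$ and $Y = g$.

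\textbf{Step 1 (tail of the shadowing).} Write $\beta = m_s - 1$. The inverse-Gamma density is
\[
f_s(y) = \frac{\beta^{m_s}}{\Gamma(m_s)}\, y^{-m_s-1} e^{-\beta/y}, \qquad y > 0 .
\]
For large $y$ we have $e^{-\beta/y} \to 1$, so $f_s(y) \sim \beta^{m_s} y^{-m_s-1}/\Gamma(m_s)$. Integrating this equivalence from $x$ to $\infty$ gives
\[
\Prob[s > x] \sim \frac{\beta^{m_s}}{\Gamma(m_s)\, m_s}\, x^{-m_s},
\]
so $\bar F_s \in \mathrm{RV}_{-m_s}$. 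The integration step is justified either by dominated convergence after the substitution $y = xt$, or by the exact identity $\Prob[s > x] = \gamma(m_s, \beta/x)/\Gamma(m_s)$ together with $\gamma(a,z) \sim z^a/a$ as $z \to 0$.

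\textbf{Step 2 (Breiman).} The variable $g \sim \Gamma(m, 1/m)$ has an exponentially decaying tail, so $\E[g^{m_s+\epsilon}] < \infty$ for every $\epsilon > 0$. By \Cref{lem:breiman},
\[
\Prob[h > x] \sim \E[g^{m_s}]\, \Prob[s > x].
\]
The Gamma moment formula gives $\E[g^{m_s}] = \Gamma(m + m_s)/\bigl(\Gamma(m)\, m^{m_s}\bigr)$. Multiplying this by the constant from Step 1 yields exactly \eqref{eq:lem-F-tail}, including the factor $\bigl((m_s-1)/m\bigr)^{m_s}$. Since a constant times $x^{-m_s}$ lies in $\mathrm{RV}_{-m_s}$, this also gives $\bar F_h \in \mathrm{RV}_{-m_s}$.

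\textbf{Cross-check.} As an independent verification, one can use the explicit Fisher--Snedecor density of $h$, which is proportional to
\[
x^{m-1}\bigl(1 + m x/(m_s-1)\bigr)^{-(m+m_s)}.
\]
Its leading term at infinity is a constant times $x^{-m_s-1}$, and integrating gives the same constant via $1/B(m, m_s) = \Gamma(m+m_s)/\bigl(\Gamma(m)\Gamma(m_s)\bigr)$.

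There is no real obstacle in this argument. The only point requiring care is Step 1: passing from asymptotic equivalence of the densities to asymptotic equivalence of the tail integrals. The incomplete-gamma identity settles this cleanly, and it is the route we would use.
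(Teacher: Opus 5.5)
Your proof is correct, but its main line differs from the paper's. The paper's proof is essentially your ``cross-check'': it starts from the closed-form scaled $\mathcal{F}(2m,2m_s)$ density of $h$, reads off $f_h(x)\sim\frac{\Gamma(m+m_s)}{\Gamma(m)\Gamma(m_s)}\bigl(\frac{m_s-1}{m}\bigr)^{m_s}x^{-m_s-1}$, and integrates. It asserts the density-to-tail step as ``direct integration,'' whereas your incomplete-gamma identity makes the analogous step for $s$ exact.

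You instead obtain the tail of the inverse-Gamma factor alone and transfer it to the product through Breiman's lemma. The constant then appears as $\E[g^{m_s}]$ times the shadowing tail constant.

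Your route buys generality and transparency. It never needs the product density in closed form, so it works verbatim for any multipath gain $g$ independent of $s$ with $\E[g^{m_s+\epsilon}]<\infty$ (e.g.\ Rician or $\kappa$-$\mu$ multipath). It also shows directly why the tail index is set by the shadowing alone, while the multipath enters only through $\E[g^{m_s}]$. The paper only asserts this informally in its discussion around \Cref{tab:scaling-comparison}. Finally, it reuses the same tool the paper already invokes in \Cref{thm:sinr-tail}.

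The paper's route is more elementary, with no appeal to an external lemma. It also gives the slightly stronger local statement that the density itself is regularly varying with index $-m_s-1$. However, it depends on the Nakagami-$m$ $\times$ inverse-Gamma pairing having a known closed-form product law.
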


\begin{proof}
As $g \hsp{-2.5}{-2.5}{\sim} \Gamma(m, 1/m)$ and $1/s \hsp{-2.5}{-2.5}{\sim} \Gamma(m_s, 1/(m_s-1))$ are
independent, $h \hsp{-2}{-2}{=} g \hsp{-1}{-1}{\cdot} s$ is a ratio of two independent Gamma RVs,
which follows a scaled $F$-distribution with $2m$ and $2m_s$ degrees of freedom~\cite{Yoo2017}.
Its PDF is given by $f_h(x) \hsp{-2}{-2}{=} \frac{\Gamma(m+m_s)}{\Gamma(m)\Gamma(m_s)}
  \left(\frac{m}{m_s - 1}\right)^{\!m}
  \hspace{-1pt} \frac{x^{m-1}}{\bigl(1 + \frac{m}{m_s - 1}x\bigr)^{m+m_s}},
  x \hsp{-2}{-2}{>} 0$.
For $x \hsp{-2.5}{-2.5}{\to} \infty$, $\frac{m}{m_s-1}x$ dominates the denominator, 
and therefore $\bigl(1 + \tfrac{m}{m_s-1}x\bigr)^{m+m_s} \hsp{-2}{-2}{\sim} \bigl(\tfrac{m}{m_s-1}\bigr)^{m+m_s} x^{m+m_s}$.
Substituting and simplifying, one obtains $f_h(x) \hsp{-2}{-2}{\sim} \frac{\Gamma(m+m_s)}{\Gamma(m)\,\Gamma(m_s)} \left(\frac{m_s-1}{m}\right)^{m_s} x^{-m_s - 1}$.
The survival function then follows by direct integration:
\begin{align}
  \Prob[h > x]
  \hsp{-2}{-2}{\sim}
  \frac{\Gamma(m+m_s)}{\Gamma(m)\,\Gamma(m_s)\,m_s}
  \left(\frac{m_s-1}{m}\right)^{m_s}
  x^{-m_s},
\end{align}
yielding~\eqref{eq:lem-F-tail}.
\end{proof}

Without shadowing, Nakagami-$m$ fading possesses an exponential tail ($\bar{F}_h(x) \hsp{-1}{-1}{\sim} e^{-mx}$) belonging to the Gumbel domain.
The inverse-Gamma shadowing transforms this into a power law $x^{-m_s}$, shifting the distribution to the Fr\'{e}chet domain.
Consequently, smaller $m_s$ indicates heavier shadowing and a more pronounced diversity scaling.

\begin{remark}[Boundary case $m_s \hsp{-1}{-1}{\le} 1$]\label{rem:ms-boundary}
We require $m_s \hsp{-1}{-1}{>} 1$ to ensure a finite shadowing mean ($\mathbb{E}[s_k] \hsp{-1}{-1}{=} 1$).
While $m_s \hsp{-1}{-1}{=} 1$ implies $\gamma_{\max}^{(K)} \hsp{-1}{-1}{\sim} K$, the mean SINR diverges;
for $m_s \hsp{-1}{-1}{<} 1$, the mean shadowing power is undefined. These regimes are thus excluded.
\end{remark}

\vspace{-3mm}
\subsection{Robustness to Interference}

A fundamental tension exists between the power-law gain of the $\mathcal{F}$-distributed signal and the $\alpha_I$-stable fluctuations
of the PPP interference $I_k$. While the signal's heavy tail theoretically enhances multi-user diversity,
aggregate interference is also characterized by extreme spikes from nearby interferers.
This creates a "competition of extremes": if interference spikes are sufficiently intense, they may "drown out" the signal's tail and neutralize diversity gains.
The central analytical question is whether the SINR inherits the robust power-law properties of the signal or is limited by nearest-neighbor interference.
The following theorem establishes that interference affects only the scaling constant, not the tail index.

\begin{theorem}[SINR tail]\label{thm:sinr-tail}
  Under the system model of \Cref{ssec:system-model} with $m_s \hsp{-2}{-2}{>} 1$, the SINR
  of each user has a regularly varying tail in both the noise-limited and interference-limited regimes:
  \begin{equation}\label{eq:sinr-tail}
    \Prob[\gamma_k > x] \sim A_k \, x^{-m_s},
    \quad x \to \infty,
  \end{equation}
  where $A_k$ is a user-dependent constant. Denoting the tail constant from \Cref{lem:F-tail} by
  \begin{equation}\label{eq:Ch}
    C_h =
    \frac{\Gamma(m + m_s)}{\Gamma(m)\,\Gamma(m_s)\,m_s}
    \left(\frac{m_s - 1}{m}\right)^{m_s},
  \end{equation}
  the scaling constant is $A_k \hsp{-2}{-2}{=} C_h\,(P r_k^{-\eta}/\sigma^2)^{m_s}$ in the noise-limited regime and
  $A_k \hsp{-2}{-2}{=} C_h\,\E[(P r_k^{-\eta}/I_k)^{m_s}]$ in the interference-limited regime.
  For the Rayleigh fading case $m \hsp{-2}{-2}{=} 1$, $\Gamma(1+m_s)/(\Gamma(1)\,\Gamma(m_s)\,m_s) \hsp{-2}{-2}{=} 1$,
  so $C_h$ simplifies to $(m_s \hsp{-2}{-2}{-} 1)^{m_s}$.
\end{theorem}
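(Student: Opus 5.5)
The plan is to write $\gamma_k = h_k \cdot Y_k$ with $Y_k = P r_k^{-\eta}/(I_k+\sigma^2)$ independent of $h_k$. We then invoke Breiman's lemma (\Cref{lem:breiman}) with $X = h_k$, using $\bar F_{h}\in\mathrm{RV}_{-m_s}$ from \Cref{lem:F-tail}. This yields $\Prob[\gamma_k>x]\sim \E[Y_k^{m_s}]\,C_h x^{-m_s}$, and everything reduces to checking the moment condition $\E[Y_k^{m_s+\epsilon}]<\infty$ for some $\epsilon>0$. The Rayleigh simplification at the end is immediate, since $\Gamma(1+m_s)=m_s\Gamma(m_s)$.

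\emph{Noise-limited case.} Here $I_k=0$, so $Y_k = P r_k^{-\eta}/\sigma^2$ is a deterministic constant (conditionally on the fixed location $r_k$). All of its moments are finite, and we can even bypass Breiman by the trivial scaling $\Prob[\gamma_k>x]=\bar F_h(x\sigma^2 r_k^\eta/P)$ together with regular variation. This gives $A_k=C_h(P r_k^{-\eta}/\sigma^2)^{m_s}$.

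\emph{Interference-limited case.} We need $\E[I_k^{-(m_s+\epsilon)}]<\infty$. Equivalently, if the noise is kept, the bound $Y_k\le P r_k^{-\eta}/\sigma^2$ makes the condition trivial, and the stated constant is the $\sigma^2\to 0$ form. The purely interference-limited statement requires negative moments of the totally skewed $\alpha_I$-stable variable $I_k$, and this is the main obstacle: one must control the lower tail $\Prob[I_k\le y]$ as $y\to 0$. The approach I would take is the Laplace-transform route. For a positive random variable, $\E[I^{-p}] = \frac{1}{\Gamma(p)}\int_0^\infty t^{p-1}\E[e^{-tI}]\,dt$. For the PPP with Rayleigh marks, $\E[e^{-tI_k}] = \exp(-c\,t^{\alpha_I})$ with $c=\lambda\pi P^{\alpha_I}\Gamma(1-\alpha_I)\E[\tilde g^{\alpha_I}]$, i.e.\ a Weibull-type decay in $t$. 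Hence the integral $\int_0^\infty t^{p-1}e^{-ct^{\alpha_I}}dt = \Gamma(p/\alpha_I)/(\alpha_I c^{p/\alpha_I})$ is finite for every $p>0$. So all negative moments of $I_k$ exist, and in closed form $\E[I_k^{-p}]=\Gamma(p/\alpha_I)/(\alpha_I\Gamma(p)c^{p/\alpha_I})$. Equivalently, the lower tail of a positive stable law is $\exp(-\Theta(y^{-\alpha_I/(1-\alpha_I)}))$, which is super-polynomially thin near zero. Taking $p=m_s+\epsilon$ verifies Breiman's hypothesis. We should also note that the upper (heavy) tail of $I_k$ is harmless, since it only makes $Y_k$ small.

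Finally, combining the two cases gives \eqref{eq:sinr-tail} with $A_k=C_h\E[Y_k^{m_s}]$. This specializes to the two stated constants, and the constant is finite and positive by the moment computation above.
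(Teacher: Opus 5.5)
Your proposal is correct and has the same structure as the paper's proof. The noise-limited case is a direct rescaling of \Cref{lem:F-tail}. The interference-limited case applies Breiman's lemma with $Y_k = P r_k^{-\eta}/I_k$, so everything reduces to showing $\E[I_k^{-(m_s+\epsilon)}]<\infty$.

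Where you differ is in how you establish that moment condition. The paper uses the small-$x$ behaviour of the one-sided stable density: $f_I(x)$ decays like $\exp(-c\,x^{-\alpha_I/(1-\alpha_I)})$ up to a polynomial factor, quoted without derivation. The polynomial factor written there differs from the standard $x^{-(2-\alpha_I)/(2(1-\alpha_I))}$, but this is harmless since only the exponential matters. Your ``equivalently'' remark is exactly this argument.

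Your main route combines the identity $\E[I^{-p}]=\Gamma(p)^{-1}\int_0^\infty t^{p-1}\E[e^{-tI}]\,dt$ with the PPP Laplace functional $\E[e^{-tI_k}]=\exp(-c\,t^{\alpha_I})$. What it buys:
\begin{itemize}
\item It is exact and self-contained, with no asymptotic expansion.
\item It gives the closed form $\E[I_k^{-p}]=\Gamma(1+p/\alpha_I)/\bigl(\Gamma(1+p)\,c^{p/\alpha_I}\bigr)$, hence an explicit interference-limited $A_k$ in which $P$ cancels, consistent with \Cref{rem:green}.
\item It covers \Cref{rem:f-interference} without change, since the interfering-link fading enters only through $\E[\tilde g_j^{\alpha_I}]$. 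That moment stays finite for $\mathcal{F}$-faded interferers because $\alpha_I<1<m_s'$.
\end{itemize}

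Your observation that any $\sigma^2>0$ makes $Y_k$ bounded is also worth keeping. In that case Breiman's lemma applies with no stable-law input at all, a robustness point the paper does not make.
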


\begin{proof}
In the noise-limited regime, $\gamma_k \hsp{-2}{-2}{=} (P r_k^{-\eta}/\sigma^2)\,h_k$ is a scalar multiple of~$h_k$.
By \Cref{lem:F-tail},
$\Prob[\gamma_k \hsp{-1}{-1}{>} x] \hsp{-2}{-2}{=} \Prob[h_k \hsp{-1}{-1}{>} x \sigma^2 / (P r_k^{-\eta})]
\hsp{-2}{-2}{\sim} C_h \bigl(P r_k^{-\eta}/\sigma^2\bigr)^{m_s} x^{-m_s}$, giving $A_k \hsp{-2}{-2}{=} C_h\,(P r_k^{-\eta}/\sigma^2)^{m_s}$.

For the interference-limited case, we write $\gamma_k \hsp{-2}{-2}{=} h_k \cdot (c_k / I_k)$ with $c_k \hsp{-2}{-2}{=} P r_k^{-\eta}$.
We must verify that $Y_k \hsp{-2}{-2}{=} c_k/I_k$ satisfies the moment condition of Breiman's lemma.
Although $I_k$ has a heavy \emph{upper} tail, its density near zero decays super-exponentially: $f_I(x) \hsp{-2}{-2}{\sim} x^{-1 + \alpha_I/(1-\alpha_I)} \exp(-c\,x^{-\alpha_I/(1-\alpha_I)})$ as $x \to 0^+$. Consequently, all negative moments are finite: $\E[I_k^{-q}] \hsp{-2}{-2}{<} \infty$ for every $q \hsp{-2}{-2}{>} 0$ , and in particular $\E[Y_k^{m_s + \epsilon}] \hsp{-2}{-2}{<} \infty$.
Breiman's lemma (\Cref{lem:breiman}) then gives~\eqref{eq:sinr-tail}.
\end{proof}

Physically, while large interference drives the \emph{lower} SINR tail, extreme SINR peaks arise solely from rare inverse-Gamma shadowing events ($s_k$).
Therefore, interference impacts only the scaling constant~$A_k$, determining the absolute SINR \emph{level} without altering the growth \emph{rate} of the maximum SINR with~$K$.

\begin{remark}[$\mathcal{F}$-fading on interfering links]\label{rem:f-interference}
If interferers also experience $\mathcal{F}$-fading (with shadowing index~$m_s'$) rather than Rayleigh,
the aggregate interference acquires a heavier upper tail, but the SINR tail index remains~$m_s$.
This follows because the negative moments $\E[I_k^{-q}]$ remain finite (the near-zero behavior of $I_k$
is governed by the path-loss geometry, not the per-link fading), so Breiman's lemma still applies.
The diversity scaling $K^{1/m_s}$ is thus robust to the fading model of the interfering links.
\end{remark}

\vspace{-3mm}
\subsection{The Main Scaling Law}

Having the SNR tail result from \Cref{thm:sinr-tail}, the diversity scaling follows directly from
the Fr\'echet theorem.  The following theorem provides the capacity scaling with $K$,
constituting the central contribution of the paper.

\begin{theorem}[General capacity scaling]\label{thm:general-scaling}
  Under the system model of \Cref{ssec:system-model} with $m \hsp{-2}{-2}{\ge} 1$,
  $m_s \hsp{-2}{-2}{>} 1$, $\eta \hsp{-2}{-2}{>} 2$, in both the noise-limited and interference-limited
  regimes:
  \begin{equation}\label{eq:sinr-general}
    \gamma_{\max}^{(K)} \sim (AK)^{1/m_s},
    \quad K \to \infty,
  \end{equation}
  \begin{equation}\label{eq:capacity-general}
    \ergcap(K) \sim \frac{1}{m_s}\log_2 K.
  \end{equation}
  The constant $A$ depends on $m$, $m_s$, $\eta$, and the network geometry but not on~$K$.
\end{theorem}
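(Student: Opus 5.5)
The plan is to reduce the maximum over $K$ users to the maximum of an i.i.d.\ sample and then apply \Cref{thm:frechet}. The capacity statement will follow from a uniform-integrability argument for $\log(1+\gamma_{\max}^{(K)})/\log K$.

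The users are not identically distributed when the $r_k$ are fixed, since $A_k$ depends on $r_k$. I would therefore treat the user locations $\mathbf{u}_k$ as i.i.d.\ draws from a distribution on the cell, bounded away from the BS by some $r_{\min}>0$, or at least with $\E[r_k^{-\eta m_s}]<\infty$. The $\gamma_k$ are then i.i.d.\ mixtures. In the interference-limited regime the $I_k$ of different users are correlated through the common PPP $\Phi$. To handle this I would first condition on $\Phi$, take the $I_k$ as i.i.d.\ (typical-user abstraction), or alternatively sandwich the correlated case between a bound with $I_k$ replaced by its worst and best values over the user region.

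With $\gamma_k = h_k\cdot Y_k$, where $Y_k = P r_k^{-\eta}/(I_k+\sigma^2)$, the factor $Y_k$ is independent of $h_k$. I would show $\E[Y_k^{m_s+\epsilon}]<\infty$ using two inputs: the negative-moment finiteness already shown in the proof of \Cref{thm:sinr-tail}, and the bound $r_k\ge r_{\min}$. \Cref{lem:breiman} then gives $\Prob[\gamma_k>x]\sim A x^{-m_s}$ with
\[
A = C_h\,\E\!\bigl[Y_k^{m_s}\bigr],
\]
which is the averaged version of the $A_k$ in \Cref{thm:sinr-tail}. This $A$ depends on $m,m_s,\eta,\lambda,P,\sigma^2$ and the geometry but not on $K$. \Cref{thm:frechet} then yields $\gamma_{\max}^{(K)}/(AK)^{1/m_s}\xrightarrow{d}\Phi_{m_s}$, and I would read \eqref{eq:sinr-general} in this in-probability sense.

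For \eqref{eq:capacity-general}, the Fréchet limit implies $\log_2(1+\gamma_{\max}^{(K)}) - \frac{1}{m_s}\log_2(AK)\xrightarrow{d}\log_2 Z$ with $Z\sim\Phi_{m_s}$. This gives convergence in probability of $\log_2(1+\gamma_{\max}^{(K)})/\log_2 K$ to $1/m_s$. Upgrading to convergence of expectations needs uniform integrability, which I would split into two tails.

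\emph{Upper tail.} By the union bound, $\Prob[\gamma_{\max}^{(K)}>t(AK)^{1/m_s}]\le K\bar F_\gamma(t(AK)^{1/m_s})\le C t^{-m_s}$ for large $t$. Together with $\log(1+x)\le 1+\log x$, this gives the domination $\E[(\log_2(1+\gamma_{\max}^{(K)}) - \tfrac{1}{m_s}\log_2 K)^+]=O(1)$.

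\emph{Lower tail.} Since $\log(1+\cdot)\ge0$, I would bound $\E[\log_2(1+\gamma_{\max}^{(K)})]$ from below by $\frac{1}{m_s}\log_2(\delta AK)\,\Prob[\gamma_{\max}^{(K)}>(\delta AK)^{1/m_s}]$. That probability tends to $1-e^{-1/\delta}$, which is close to $1$ for small $\delta$.

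Together these give $\ergcap(K)=\frac{1}{m_s}\log_2 K+O(1)$, which proves \eqref{eq:capacity-general}. As a byproduct, the constant term converges to $\frac{1}{m_s}\log_2 A+\E[\log_2 Z]$.

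The main obstacle is the dependence structure in the interference-limited regime. The $I_k$ share one PPP, so the $\gamma_k$ are only conditionally independent given $\Phi$, and \Cref{thm:frechet} does not apply directly. My first attempt would be to work conditionally on $\Phi$. Given $\Phi$, the $\gamma_k$ are independent, with $\Prob[\gamma_k>x\mid\Phi]\sim C_h\,\E[Y_k^{m_s}\mid\Phi]\,x^{-m_s}$. This yields a Fréchet limit with a random constant $A(\Phi)$, which is finite almost surely by the negative-moment argument. Consequently the exponent $1/m_s$ and the leading term $\frac{1}{m_s}\log_2 K$ survive after averaging over $\Phi$. Only the additive constant becomes $\E[\log_2 A(\Phi)]/m_s$, which requires $\E|\log A(\Phi)|<\infty$; I would check this separately.
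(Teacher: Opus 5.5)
Your proof is correct for the statement as posed, but it is organized differently from the paper's in both halves.

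\textbf{SINR scaling.} The paper feeds the per-user tail of \Cref{thm:sinr-tail} straight into \Cref{thm:frechet}. It tacitly treats the $\gamma_k$ as i.i.d., although $A_k$ depends on $r_k$ and all $I_k$ come from one PPP. Your randomized locations and conditioning on $\Phi$ make the needed i.i.d.\ structure explicit. They also show what actually holds in the correlated case: a Fr\'echet limit with random scale $A(\Phi)$. The exponent and the leading $\frac{1}{m_s}\log_2 K$ survive, but a deterministic $A$ needs the typical-user abstraction. You can close the averaging over $\Phi$ by dominated convergence, using $\E[\log_2(1+\gamma_{\max}^{(K)})\mid\Phi]\le\log_2(1+K)+\log_2(1+\E[\gamma_1\mid\Phi])$ together with $\E[\gamma_1]<\infty$ (as $m_s>1$ and $\E[Y_k]<\infty$).

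\textbf{Capacity.} The paper writes $\ergcap(K)=\log_2 a_K+\E[\log_2(1/a_K+Z_K)]$ and asserts uniform integrability via $|\log_2(1/a_K+Z_K)|\le Z_K^\epsilon$. That bound fails for small $Z_K$: on $\{Z_K\le 1/a_K\}$ the left side is at least $\log_2(a_K/2)\to\infty$ while $Z_K^\epsilon\to 0$. So the lower tail is not really handled there, even though the paper is aiming at the sharper expansion $\frac{1}{m_s}\log_2(AK)+\E[\log_2 Z]+o(1)$. Your split (union bound above, nonnegativity plus the Fr\'echet limit below) never needs control of small $Z_K$. It delivers exactly the ratio statement \eqref{eq:capacity-general}, which is all the theorem asserts.

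\textbf{Caveat on your lower bound.} With $\delta$ fixed, your lower bound only yields $\liminf \ergcap(K)/(\tfrac{1}{m_s}\log_2 K)\ge 1-e^{-1/\delta}$. It therefore does not prove your claimed ``$+O(1)$'' error or the convergence of the constant. For those you would add a quantitative lower tail. Choose $x_1$ with $\bar F_\gamma(x)\ge \tfrac12 A x^{-m_s}$ for $x\ge x_1$; then $\Prob[\gamma_{\max}^{(K)}<(AK)^{1/m_s}2^{-u}]\le\exp(-\tfrac12 2^{u m_s})$ whenever $(AK)^{1/m_s}2^{-u}\ge x_1$. For the remaining range of $u$, the contribution is at most $\tfrac{1}{m_s}\log_2(AK)\,(1-\bar F_\gamma(x_1))^K\to 0$.
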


\begin{proof}
\emph{SINR scaling.}
\Cref{thm:sinr-tail} implies that the per-user SINR has a regularly varying tail: $\bar{F}_{\gamma_k} \hsp{-2}{-2}{\in} \mathrm{RV}_{-m_s}$.
Applying \Cref{thm:frechet} with $\alpha \hsp{-2}{-2}{=} m_s$ gives $\gamma_{\max}^{(K)} / a_K \hsp{-2}{-2}{\xrightarrow{d}} Z$ as $K \hsp{-2}{-2}{\to} \infty$,
where $a_K \hsp{-2}{-2}{=} (AK)^{1/m_s}$ and $Z$ follows the standard Fr\'echet distribution with shape~$m_s$.
In particular, $\gamma_{\max}^{(K)} \hsp{-2}{-2}{\sim} (AK)^{1/m_s}$ in probability, establishing~\eqref{eq:sinr-general}.

\emph{Capacity scaling.}
We show that the convergence in distribution carries over to the expectation of $\log_2(1 \hsp{-1}{-1}{+} \gamma_{\max}^{(K)})$.
Let $Z_K \hsp{-2}{-2}{=} \gamma_{\max}^{(K)} / a_K$, such that $Z_K \hsp{-2}{-2}{\xrightarrow{d}} Z$ and
\begin{equation}\label{eq:ergcap_decomp}
  \ergcap(K) \hsp{-2}{-2}{=} \log_2 a_K \hsp{-1}{-1}{+} \E\!\bigl[\log_2(1/a_K \hsp{-1}{-1}{+} Z_K)\bigr].
\end{equation}
Since $Z_K \hsp{-2}{-2}{\xrightarrow{d}} Z$ with $Z \hsp{-2}{-2}{>} 0$ a.s.\ (the Fr\'echet distribution is supported on
$(0,\infty)$) and $1/a_K \hsp{-2}{-2}{\to} 0$, 
the continuous mapping theorem~\cite[Theorem~2.3]{Billingsley1999} gives
$\log_2(1/a_K + Z_K) \hsp{-2}{-2}{\xrightarrow{d}} \log_2 Z$.
To go from convergence in distribution to convergence of expectations, we note that
$|\log_2(1/a_K + Z_K)| \hsp{-2}{-2}{\le} Z_K^\epsilon$ for any $\epsilon \hsp{-2}{-2}{>} 0$
and all sufficiently large~$K$, since $\log$ grows slower than any power.
Since $\E[Z_K^\epsilon] \hsp{-2}{-2}{\to} \E[Z^\epsilon] \hsp{-2}{-2}{=} \Gamma(1 \hsp{-1}{-1}{-} \epsilon/m_s) \hsp{-2}{-2}{<} \infty$ for
$\epsilon \hsp{-2}{-2}{<} m_s$~\cite[Corollary~1.2.10]{DeHaan2006}, the sequence is uniformly integrable,
giving $\E[\log_2(1/a_K \hsp{-1}{-1}{+} Z_K)] \hsp{-2}{-2}{\to} \E[\log_2 Z]$.
Substituting $a_K \hsp{-2}{-2}{=} (AK)^{1/m_s}$ into \Cref{eq:ergcap_decomp} we obtain
$\ergcap(K) \hsp{-2}{-2}{=} \frac{1}{m_s}\log_2(AK) \hsp{-1}{-1}{+} \E[\log_2 Z] \hsp{-1}{-1}{+} o(1) \hsp{-2}{-2}{\sim} \frac{1}{m_s}\log_2 K$,
establishing~\eqref{eq:capacity-general}.
\end{proof}

A nice aspect of this result is its universality: the scaling exponent $1/m_s$ depends \emph{only} on
the shadowing severity parameter and is invariant to the multi-path fading parameter~$m$, the path-loss
exponent~$\eta$, and whether the system operates in a noise- or interference-limited regime.
All of these parameters affect the constant~$A$, and therefore the absolute SINR level, but not the rate at which multi-user diversity grows.
For the canonical case of moderately heavy shadowing ($m_s \hsp{-2}{-2}{=} 2$), this yields the $\sqrt{K}$ scaling:
\begin{corollary}[$\sqrt{K}$ scaling]\label{cor:sqrtK}
  For $m_s \hsp{-2}{-2}{=} 2$: $\gamma_{\max}^{(K)} \hsp{-2}{-2}{\sim} \sqrt{AK}$ and
  $\ergcap(K) \hsp{-2}{-2}{\sim} \tfrac{1}{2}\log_2 K$.
\end{corollary}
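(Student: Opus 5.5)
This is a direct specialization of \Cref{thm:general-scaling} to $m_s = 2$, so the plan is to instantiate that theorem rather than redo the extreme-value analysis. First, check that the hypotheses hold. With $m_s = 2 > 1$, \Cref{lem:F-tail} gives $\bar{F}_h \in \mathrm{RV}_{-2}$, with tail constant
\[
C_h = \frac{\Gamma(m+2)}{2\,\Gamma(m)}\left(\frac{1}{m}\right)^{2} = \frac{m+1}{2m},
\]
which reduces to $1$ in the Rayleigh case $m=1$, consistent with $(m_s-1)^{m_s}=1$. \Cref{thm:sinr-tail} then gives $\Prob[\gamma_k > x] \sim A_k x^{-2}$. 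In the noise-limited regime $A_k = C_h (P r_k^{-\eta}/\sigma^2)^2$, and in the interference-limited regime $A_k = C_h \E[(P r_k^{-\eta}/I_k)^2]$. The latter is finite because all negative moments of the $\alpha$-stable interference are finite.

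Next, apply \Cref{thm:frechet} with $\alpha = 2$. Normalizing by $a_K = (AK)^{1/2}$, the ratio $\gamma_{\max}^{(K)}/\sqrt{AK}$ converges in distribution to a Fr\'echet law with $\Phi_2(x) = \exp(-x^{-2})$. This yields $\gamma_{\max}^{(K)} \sim \sqrt{AK}$ in probability.

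For the capacity, reuse the decomposition \eqref{eq:ergcap_decomp} and the uniform-integrability argument from the proof of \Cref{thm:general-scaling}. Choose any $\epsilon < m_s = 2$, say $\epsilon = 1$; then $\E[Z^{\epsilon}] = \Gamma(1-\epsilon/2) < \infty$. This gives
\[
\ergcap(K) = \tfrac12 \log_2(AK) + \E[\log_2 Z] + o(1).
\]
For $Z$ with CDF $\Phi_2$ one can also evaluate $\E[\ln Z] = \gamma_{\mathrm{EM}}/2$, where $\gamma_{\mathrm{EM}}$ is the Euler--Mascheroni constant; this is a finite offset. Hence $\ergcap(K) \sim \tfrac12 \log_2 K$.

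The only point needing care is the interpretation of $A$. The theorems above are stated for i.i.d.\ users, but $A_k$ depends on $r_k$. I would handle this in one of two ways. Either treat the users as equidistant, so $A = A_k$. Or, if users are placed randomly and i.i.d., apply Breiman's lemma once more to the factor $r_k^{-2\eta}$ (assuming the location law keeps this moment finite), giving $A = \E[A_k]$. A third option is to note that for deterministic heterogeneous $r_k$, the survival of the maximum is asymptotically $\exp(-x^{-2}\sum_k A_k)$, so one takes $A = K^{-1}\sum_k A_k$ whenever this average converges. This is the main obstacle, and it is resolved by a short computation on $\Prob[\gamma_{\max}^{(K)} \le x\sqrt{K}] = \prod_k (1 - A_k x^{-2}/K + o(1/K))$. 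The remaining steps are routine substitution.
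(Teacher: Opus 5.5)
Your proposal is correct in substance and follows the paper's route exactly: the corollary is just \Cref{thm:general-scaling} evaluated at $m_s=2$, with no separate proof. Your additions ($C_h=(m+1)/(2m)$, $\E[\ln Z]=\gamma_{\mathrm{EM}}/2$, and the discussion of user-dependent $A_k$) are sensible refinements the paper leaves implicit, but two points you inherit from the paper need care: since $\gamma_{\max}^{(K)}/\sqrt{AK}\xrightarrow{d}Z$ with $Z$ a non-degenerate Fr\'echet$(2)$ variable, the ratio does not tend to $1$ in probability, so the first claim is only an order-of-growth statement; and the reused domination $|\log_2(1/a_K+Z_K)|\le Z_K^{\epsilon}$ fails for small $Z_K$, so the lower tail must be controlled separately, e.g.\ via $\Prob[Z_K\le t]=\bigl(1-\bar{F}_{\gamma_1}(a_K t)\bigr)^K\le\exp\bigl(-K\bar{F}_{\gamma_1}(a_K t)\bigr)$.
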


To put this in perspective, increasing the number of users from $K \hsp{-2}{-2}{=} 100$ to $K \hsp{-2}{-2}{=} 1000$ yields a capacity increase of
$\frac{1}{2}\log_2(1000/100) \hsp{-2}{-2}{=} \frac{1}{2}\log_2 10 \hsp{-2}{-2}{\approx} 1.66$~bits/s/Hz
under $\mathcal{F}$-fading with $m_s \hsp{-2}{-2}{=} 2$, compared to only $\approx 0.58$~bits/s/Hz under Rayleigh fading.
A nearly threefold difference that continues to widen as $K$ grows.

\vspace{-3mm}
\subsection{Outage and Higher-Order Statistics}

The Fr\'echet scaling has several important implications beyond the ergodic capacity.

\paragraph{Outage probability}
System designers often require guarantees on the probability that the SINR exceeds a threshold~$\tau$.
Under Fr\'echet scaling, this probability is more favorable than in the Gumbel regime.

\begin{proposition}[Outage probability]\label{thm:outage}
  The outage probability for a target threshold~$\tau$, in the high-SINR regime ($\tau \hsp{-2}{-2}{\gg} A^{1/m_s}$, so that
  $A\,\tau^{-m_s} \hsp{-2}{-2}{\ll} 1$), satisfies
  \begin{equation}\label{eq:outage}
    \Prob\!\bigl[\gamma_{\max}^{(K)} < \tau\bigr] \sim \exp\bigl(-A\,K\,\tau^{-m_s}\bigr).
  \end{equation}
  The required number of users for target outage $\epsilon$ is
  $K^\star \hsp{-2}{-2}{=} (\tau^{m_s}/A)\ln(1/\epsilon)$, growing \emph{polynomially} in~$\tau$.
\end{proposition}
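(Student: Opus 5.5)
The plan is to use independence across users, which makes the CDF of the maximum a product, and then replace each per-user survival function by its power-law asymptote from \Cref{thm:sinr-tail}. Conditionally on the geometry and interference realization, the users' SINRs are independent. For i.i.d.\ users (or, more generally, with $A$ the average $\frac{1}{K}\sum_k A_k$ of the per-user constants) we write
\begin{equation*}
  \Prob\bigl[\gamma_{\max}^{(K)} < \tau\bigr] = \prod_{k=1}^K \bigl(1 - \bar F_{\gamma_k}(\tau)\bigr) = \exp\Bigl(\sum_{k=1}^K \ln\bigl(1 - \bar F_{\gamma_k}(\tau)\bigr)\Bigr).
\end{equation*}
In the high-SINR regime, $\bar F_{\gamma_k}(\tau) \sim A_k \tau^{-m_s} \ll 1$ by \eqref{eq:sinr-tail}. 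The inequalities $-u - u^2 \le \ln(1-u) \le -u$ for $u \le 1/2$ then give the exponent as $-\sum_k A_k \tau^{-m_s}\,(1+o(1)) = -AK\tau^{-m_s}(1+o(1))$. This yields \eqref{eq:outage}.

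To make ``$\sim$'' precise, I would read it in the joint regime $\tau \to \infty$ with $K\tau^{-m_s}$ held bounded. This is exactly the Fr\'echet scaling of \Cref{thm:frechet}: set $\tau = a_K x$, so that $\Prob[\gamma_{\max}^{(K)}/a_K < x] \to \exp(-x^{-m_s}) = \exp(-AK\tau^{-m_s})$. The step I expect to be the main obstacle is the error control in the exponent. The correction terms are $K\cdot o(\tau^{-m_s})$ from the tail approximation and $K\cdot O(\tau^{-2m_s})$ from the quadratic term of the logarithm. Both are $o(1)$ only when $K\tau^{-m_s} = O(1)$. So I would state the result as a ratio of exponents tending to one; this is the same as asymptotic equivalence of the logarithms of the outage probabilities, and it holds uniformly over compact ranges of $K\tau^{-m_s}$. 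With user-dependent $A_k$, this uniformity further requires $\sup_k A_k < \infty$, which holds if the user distances $r_k$ are bounded away from zero.

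Finally, for the design rule I would set $\exp(-AK\tau^{-m_s}) = \epsilon$ and solve, which gives $K^\star = (\tau^{m_s}/A)\ln(1/\epsilon)$. This is polynomial of degree $m_s$ in $\tau$. For contrast, in the Gumbel case one has $\bar F(\tau) \approx e^{-m\tau}$, so $K^\star$ grows like $e^{m\tau}$.
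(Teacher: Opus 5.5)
Your proposal is correct and follows the paper's own argument: factor the outage as $\bigl(1-\Prob[\gamma_1>\tau]\bigr)^K$ over i.i.d.\ users, insert the power-law tail from \Cref{thm:sinr-tail}, replace $(1-x)^K$ by $e^{-Kx}$, and invert for $K^\star$; your two-sided bound on $\ln(1-u)$ and the joint regime $K\tau^{-m_s}=O(1)$ make rigorous what the paper leaves as an approximation. One caution is that the product with the deterministic constants $A_k$ of \Cref{thm:sinr-tail} needs the SINRs to be unconditionally independent, which the paper assumes; independence only conditional on the interference gives the random constants $C_h(c_k/I_k)^{m_s}$, and with correlated $I_k$ the unconditional outage is then the mixture $\E\bigl[\exp\bigl(-\tau^{-m_s}\sum_k C_h (c_k/I_k)^{m_s}\bigr)\bigr]$ rather than \eqref{eq:outage}.
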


\begin{proof}
As the $K$ users have i.i.d.\ SINRs, the outage probability factors as
$\Prob\bigl[\gamma_{\max}^{(K)} \hsp{-1}{-1}{<} \tau\bigr] \hsp{-2}{-2}{=} \bigl(1 \hsp{-1}{-1}{-} \Prob[\gamma_1 \hsp{-1}{-1}{>} \tau]\bigr)^K$.
From \Cref{thm:sinr-tail}, $\Prob[\gamma_1 \hsp{-1}{-1}{>} \tau] \hsp{-2}{-2}{\sim} A\,\tau^{-m_s}$ for large~$\tau$.
Substituting and applying the approximation $(1 - x)^K \hsp{-2}{-2}{\approx} e^{-Kx}$ for $x \hsp{-2}{-2}{\ll} 1$ gives~\eqref{eq:outage}.
Setting $\exp(-A\,K\,\tau^{-m_s}) \hsp{-2}{-2}{=} \epsilon$ and solving for~$K$ one obtains $K^\star \hsp{-2}{-2}{=} \frac{\tau^{m_s}}{A}\,\ln\frac{1}{\epsilon}$.
\end{proof}

Under Rayleigh fading ($\Prob[\gamma > \tau] \hsp{-2}{-2}{\sim} e^{-\tau}$), the required user pool grows exponentially as
$K^\star_{\mathrm{Ray}} \hsp{-2}{-2}{=} e^\tau \ln(1/\epsilon)$. For $m_s \hsp{-2}{-2}{=} 2$, the ratio of required users between the two regimes is:
\begin{equation}
    \frac{K^\star_{\mathrm{Ray}}}{K^\star_{\mathcal{F}}} = \frac{A \, e^\tau}{\tau^2}.
\end{equation}
At an SINR target of $\tau \hsp{-2}{-2}{=} 10$ ($10$\,dB), this ratio is $\approx 220\,A$.
Thus, Rayleigh fading requires roughly two orders of magnitude more users to meet the identical outage target.
Practically, opportunistic schedulers in $\mathcal{F}$-fading environments can guarantee stringent outage constraints with substantially smaller user populations,
minimizing the required cell loading.

\paragraph{Second-Order Statistics and Infinite Variance}
The following proposition establishes how the Fr\'{e}chet moment structure impacts scheduler reliability.

\begin{proposition}[Moments of $\gamma_{\max}^{(K)}$]
\label{prop:second-order}
  As $\gamma_{\max}^{(K)} / (AK)^{1/m_s} \hsp{-2}{-2}{\xrightarrow{d}} Z \hsp{-2}{-2}{\sim} \mathrm{Fr\acute{e}chet}(m_s)$:
  \begin{enumerate}[label=(\roman*)]
    \item $\E[\gamma_{\max}^{(K)}] \hsp{-2}{-2}{\sim} (AK)^{1/m_s}\,\Gamma(1 \hsp{-1}{-1}{-} 1/m_s) \hsp{-2}{-2}{<} \infty$ for $m_s \hsp{-2}{-2}{>} 1$.
    \item $\E[(\gamma_{\max}^{(K)})^2] \hsp{-2}{-2}{<} \infty$ only when $m_s > 2$.
    \item For $m_s \hsp{-2}{-2}{\le} 2$ (including $m_s \hsp{-2}{-2}{=} 2$): $\Var[\gamma_{\max}^{(K)}] \hsp{-2}{-2}{=} \infty$.
  \end{enumerate}
\end{proposition}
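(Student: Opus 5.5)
Each of the three parts of \Cref{prop:second-order} rests on two facts. The first is the exact finite-$K$ tail of $\gamma_{\max}^{(K)}$. The second is uniform integrability of powers of $Z_K = \gamma_{\max}^{(K)}/a_K$ with $a_K=(AK)^{1/m_s}$; the same device handled the logarithm in the proof of \Cref{thm:general-scaling}.

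For (i), I will use the union bound and \Cref{thm:sinr-tail}:
\[
\Prob[\gamma_{\max}^{(K)}>x] \le K\,\Prob[\gamma_1>x] \le 2KA\,x^{-m_s}
\]
for $x\ge x_0$. For fixed $K$ this already gives $\E[\gamma_{\max}^{(K)}]<\infty$ when $m_s>1$.

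For the asymptotics I will show $\sup_K \E[Z_K^{p}]<\infty$ for some $p\in(1,m_s)$. Write
\[
\E[Z_K^p]=\int_0^\infty p z^{p-1}\Prob[Z_K>z]\,dz ,
\qquad \Prob[Z_K>z]\le K\,\bar F_{\gamma}(a_K z).
\]
I will bound $K\bar F_\gamma(a_K z)\le C z^{-m_s+\delta}$ uniformly in $K$ for $z\ge 1$. Potter's bounds for the $\mathrm{RV}_{-m_s}$ function $\bar F_\gamma$ give this, using $K\bar F_\gamma(a_K)\to 1$. The region $z<1$ contributes at most $1$. Choosing $\delta$ with $p<m_s-\delta$ makes the integral finite uniformly in $K$.

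Uniform integrability of $Z_K$ then follows. Combined with $Z_K\xrightarrow{d}Z$, this gives $\E[Z_K]\to\E[Z]=\Gamma(1-1/m_s)$, by the Fréchet moment formula $\E[Z^r]=\Gamma(1-r/m_s)$ quoted from~\cite[Corollary~1.2.10]{DeHaan2006}. Multiplying by $a_K$ yields (i).

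The step that needs care is applying Potter uniformly for small $a_K z$. This is handled by restricting to $z\ge 1$ and $K$ large, so that $a_K z\ge x_0$.

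For (ii), sufficiency when $m_s>2$ is the same union-bound estimate with the integrand $2x$: $\int^\infty x\cdot K x^{-m_s}\,dx<\infty$. For necessity, I will use $\gamma_{\max}^{(K)}\ge\gamma_1$. Hence
\[
\E[(\gamma_{\max}^{(K)})^2]\ge \E[\gamma_1^2]=\int_0^\infty 2x\,\Prob[\gamma_1>x]\,dx .
\]
By \Cref{thm:sinr-tail} the integrand is eventually $\ge A x^{1-m_s}$, whose integral diverges for $m_s\le 2$. This argument is exact for every $K$ and needs no limiting argument.

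For (iii), with $m_s\in(1,2]$ the mean is finite by (i) and the second moment is infinite by (ii). Therefore $\Var[\gamma_{\max}^{(K)}]=\E[(\gamma_{\max}^{(K)})^2]-(\E\gamma_{\max}^{(K)})^2=\infty$. The same holds for the limit $Z$, since $\Gamma(1-2/m_s)$ is undefined or infinite there. The main obstacle overall is the uniform-in-$K$ domination in (i); everything else follows directly from \Cref{thm:sinr-tail} and monotonicity.
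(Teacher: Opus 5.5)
Your proof is correct, and for parts (ii)--(iii) it takes a genuinely different route from the paper.

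\textbf{The paper's route.} The paper reads all three parts off the moments of the Fr\'echet limit: $\E[Z^r]=\Gamma(1-r/m_s)$ for $r<m_s$, and $\E[Z^r]=\infty$ for $r\ge m_s$. Part (i) is transferred to $\gamma_{\max}^{(K)}$ by pointing to the convergence-of-expectations argument in the proof of \Cref{thm:general-scaling}. Parts (ii) and (iii) are inferred directly from $\E[Z^2]=\infty$ when $m_s\le 2$.

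\textbf{Your route.} You work at finite $K$ with the per-user tail from \Cref{thm:sinr-tail}:
\begin{itemize}
\item The union bound gives a finite mean, and a finite second moment when $m_s>2$.
\item The pathwise bound $\gamma_{\max}^{(K)}\ge\gamma_1$ gives $\E[(\gamma_{\max}^{(K)})^2]=\infty$ for $m_s\le 2$.
\item A uniform-in-$K$ bound on $\Prob[Z_K>z]$ gives $\sup_K\E[Z_K^p]<\infty$ for some $p\in(1,m_s)$, hence uniform integrability for (i).
\end{itemize}

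\textbf{What each buys.} The paper's version is shorter. However, its step for (ii) is not valid as a general principle. An infinite moment of a distributional limit need not be inherited by the pre-limit variables; for example, $\min(Z,n)\xrightarrow{d}Z$ while every $\min(Z,n)$ has all moments finite. For (i), the paper relies on a uniform-integrability claim that is asserted rather than proved in \Cref{thm:general-scaling}. Your argument closes both points, and it proves (ii)--(iii) exactly for every $K$. Like the paper, you tacitly use identically distributed users with a common $A$, which is the same assumption made in \Cref{thm:general-scaling}.

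\textbf{Two small remarks.}
\begin{itemize}
\item Potter's bounds are unnecessary. The bound $\Prob[\gamma_1>x]\le 2Ax^{-m_s}$ for $x\ge x_0$, together with $a_K^{m_s}=AK$, already gives $K\,\Prob[\gamma_1>a_K z]\le 2z^{-m_s}$ whenever $a_K z\ge x_0$. So you can take $\delta=0$.
\item For $1<m_s<2$, the value $\Gamma(1-2/m_s)$ is actually finite by analytic continuation, so it is not ``undefined.'' The correct statement about $Z$ is that the moment formula stops applying for $r\ge m_s$, and $\E[Z^2]=\infty$.
\end{itemize}
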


\begin{proof}
The standard Fr\'echet distribution with shape $\alpha \hsp{-2}{-2}{>} 0$ has moments $\E[Z^r] \hsp{-2}{-2}{=} \Gamma(1 - r/\alpha)$
for $0 \hsp{-2}{-2}{<} r \hsp{-2}{-2}{<} \alpha$, and $\E[Z^r] \hsp{-2}{-2}{=} \infty$ for $r \hsp{-2}{-2}{\ge} \alpha$, see~\cite[Chapter~3]{Embrechts1997}.
In addition, from \Cref{thm:general-scaling} it follows that $\gamma_{\max}^{(K)} / (AK)^{1/m_s} \hsp{-2}{-2}{\xrightarrow{d}} Z$ with $\alpha = m_s$.

(i)~For $r \hsp{-2}{-2}{=} 1$: $\E[Z] \hsp{-2}{-2}{=} \Gamma(1 - 1/m_s)$, which is finite for $m_s \hsp{-2}{-2}{>} 1$.
From the convergence-of-expectations argument of \Cref{thm:general-scaling} it follows that $\E[\gamma_{\max}^{(K)}] \hsp{-2}{-2}{\sim} (AK)^{1/m_s}\,\Gamma(1 - 1/m_s)$.

(ii)~For $r \hsp{-2}{-2}{=} 2$: $\E[Z^2] \hsp{-2}{-2}{=} \Gamma(1 - 2/m_s)$, which is finite only when $1 - 2/m_s \hsp{-2}{-2}{>} 0$.

(iii)~As $\Var[\gamma_{\max}^{(K)}]$ requires $\E[(\gamma_{\max}^{(K)})^2] \hsp{-2}{-2}{<} \infty$, the variance is infinite whenever
$m_s \hsp{-2}{-2}{\le} 2$.
\end{proof}

\Cref{prop:second-order} reveals a fundamental magnitude-reliability tradeoff separated by the physical boundary $m_s \hsp{-2}{-2}{=} 2$.
In severe shadowing environments ($m_s \hsp{-2}{-2}{\le} 2$), such as body-area networks~\cite{Cotton2014}, dense urban canyons, and industrial clutter,
the expected SINR scales rapidly. However, infinite variance causes massive slot-to-slot fluctuations, making single-slot gains highly unreliable
since the coefficient of variation does not shrink with~$K$, complicating
jitter-sensitive applications such as V2X (vehicle-to-everything) or industrial IoT. Conversely, milder propagation conditions ($m_s \hsp{-2}{-2}{>} 2$),
like suburban macro cells or office environments~\cite{Yoo2017,Badarneh2018}, yield slower diversity scaling ($(1/m_s)\log_2 K$).
Yet, scheduler performance becomes predictable, characterized by finite variance and diminishing relative fluctuations as~$K$ grows.
In practice, the infinite-variance regime ($m_s \hsp{-2}{-2}{\le} 2$) can be tamed by rate truncation
(capping the scheduled rate at a maximum value) or by averaging over multiple consecutive slots;
both mechanisms bound the variance while preserving the $1/m_s$ scaling exponent, at the cost of a reduced constant.

\vspace{-2mm}
\subsection{Scheduling and Fairness}
\label{subsec:sched_fair}

Max-throughput scheduling achieves polynomial scaling, but proportional-fair (PF) scheduling~\cite{Viswanath2002} introduces a
fundamental tradeoff depending on the shadowing coherence time.

\begin{proposition}[PF scheduling under quasi-static shadowing]\label{prop:pf-static}
  Suppose the shadowing component $s_k$ is fixed (quasi-static) for each user across the scheduling window, while the multi-path component
  $g_k[t] \hsp{-2}{-2}{\sim} \Gamma(m, 1/m)$ varies independently across slots. The PF scheduler selects
  $k^\star[t] \hsp{-2}{-2}{=} \arg\max_{1 \le k \le K} \gamma_k[t] / \bar{R}_k[t]$, where $\bar{R}_k[t]$ is the exponentially weighted moving-average (EWMA)
  throughput of user~$k$. Then the effective multi-user diversity scaling reverts to the Gumbel regime:
  \begin{equation}\label{eq:pf-gumbel}
    \ergcap_{\mathrm{PF}}(K) \sim \log_2\!\log K.
  \end{equation}
\end{proposition}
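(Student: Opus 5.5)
The plan is to make the PF selection rule explicit in the regime where the EWMA window is long, and then apply \Cref{thm:frechet} to the light-tailed, slot-varying part of the SINR only.

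\textbf{Step 1: freeze the large-scale factor.} Condition on the quasi-static shadowing vector $(s_1,\dots,s_K)$, the distances $r_k$ and, in the interference-limited case, the interference $I_k$, which we also treat as slowly varying. Write
\[
\gamma_k[t] = \bar\gamma_k\, g_k[t], \qquad \bar\gamma_k = \frac{P r_k^{-\eta} s_k}{I_k+\sigma^2}.
\]
Here $\bar\gamma_k$ is constant over the window, and the $g_k[t]$ are i.i.d.\ $\Gamma(m,1/m)$ across users and slots.

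\textbf{Step 2: normalize the metric.} In the stationary regime with a long EWMA window, $\bar R_k[t]$ concentrates on a deterministic throughput $\bar R_k$. By the scale-invariance argument of \cite{Viswanath2002}, PF with metric $\gamma_k/\bar R_k$ asymptotically gives every user an equal share $1/K$ of the slots. Moreover, it selects users on the basis of their normalized quantile. I would make this precise through the equivalent form of the rule: for large $K$ it is asymptotically equivalent to
\[
\arg\max_k F_g\bigl(\gamma_k[t]/\bar\gamma_k\bigr) = \arg\max_k g_k[t].
\]
The point is that the ratio $\gamma_k/\bar R_k$ is insensitive to the multiplicative constant $\bar\gamma_k$ once $\bar R_k$ scales with $\log\bar\gamma_k$ in the way PF enforces.

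\textbf{Step 3: compute the conditional rate of the scheduled user.} When user $k$ is scheduled, its rate is $\log_2(1+\bar\gamma_k\, g_k[t])$, with $g_k[t]$ distributed as the maximum of $K$ i.i.d.\ Gamma variables. This is the Gumbel case of \Cref{thm:frechet}: the maximum is $\sim\frac{1}{m}\log K$. The conditional per-slot rate is therefore
\[
\log_2\bar\gamma_k + \log_2\log K + O(1).
\]

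\textbf{Step 4: average over users and shadowing.} Averaging over the equal time shares gives
\[
\ergcap_{\mathrm{PF}}(K) = \frac1K\sum_k \log_2\bar\gamma_k + \log_2\log K + O(1).
\]
By the law of large numbers, the first term tends to $\E[\log_2\bar\gamma]$. This is finite because $\E|\log s|<\infty$ for the inverse-Gamma law, and because $I_k$ has finite logarithmic moments by the negative-moment bound in the proof of \Cref{thm:sinr-tail}. The heavy tail of $s_k$ thus enters only through a $K$-independent constant, yielding \eqref{eq:pf-gumbel}.

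\textbf{Main obstacle.} The hardest part is Step 2: justifying rigorously that PF reduces to selection on $g_k[t]$ alone. This requires controlling the EWMA fixed point uniformly over heterogeneous $\bar\gamma_k$, whose spread is unbounded because it is heavy-tailed. My first approach would be to cite the asymptotic fairness result of \cite{Viswanath2002}, which holds for fixed means in the limit of a long averaging window. I would then truncate to users with $\bar\gamma_k\le K^{\delta}$. The rest, with probability $\approx C K^{-\delta m_s}$ each, contribute $o(1)$ to the average throughput share, since each is served a $1/K$ fraction of the time and the logarithmic rate is integrable.
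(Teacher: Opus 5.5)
Your skeleton is the same as the paper's: freeze $\mu_k=\bar\gamma_k$, replace the EWMA by a per-user constant, and let the Gamma fast fading drive selection. The decisive step differs. The paper argues that, for each fixed $\mu_k$, the metric $\log_2(1+\mu_k g_k)/\bar R_k^\infty$ has a light upper tail, so its maximum lies in the Gumbel domain, and it reads off $\log_2\log K$. You instead reduce PF to an equal-share $\arg\max_k g_k$ rule and split the served rate into $\log_2\bar\gamma_{k^\star}+\log_2 g_{k^\star}$. This covers something the paper leaves implicit. The per-user metrics are independent but not identically distributed, and a light tail for each fixed $\mu_k$ does not by itself stop the heavy-tailed $\mu_{k^\star}$ of the selected user from re-entering the rate. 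Your decomposition isolates exactly that term as $\E[\log_2\bar\gamma_{k^\star}]$ and bounds it via $\E|\log s|<\infty$ and the log-moments of $I_k$. The paper's route is shorter and needs no statement about time shares.

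The gap is Step 2, whose justification is wrong as written. Since $\gamma_k/\bar R_k=(\bar\gamma_k/\bar R_k)\,g_k$, the metric ignores $\bar\gamma_k$ only if $\bar R_k\propto\bar\gamma_k$. If $\bar R_k$ grows like $\log\bar\gamma_k$, the factor $\bar\gamma_k/\log\bar\gamma_k$ survives, so equal shares cannot be a fixed point and the shares tilt toward strong users.

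The clean repair is to let the EWMA average the served SINR, which is what the SINR numerator calls for. Then $\bar R_k=\bar\gamma_k\,\E[\max_j g_j]/K$ is self-consistent and PF is \emph{exactly} $\arg\max_k g_k$. The winner $U$ is uniform and independent of $\max_j g_j$, so $\ergcap_{\mathrm{PF}}(K)=\E[\log_2(1+\bar\gamma_U\max_j g_j)]$ and no truncation is needed.

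If you instead keep throughput in bits, or use the rate metric $\rho_k/\bar R_k$ with $\rho_k=\log_2(1+\gamma_k)$ as in the paper's proof, selection is no longer $\arg\max_k g_k$. Your truncation then becomes circular: it assumes the outliers with $\bar\gamma_k>K^\delta$ receive a $1/K$ share, which is exactly what is in doubt for them. Step 4 really needs only two things: $\E[\log_2\bar\gamma_{k^\star}]=O(1)$, and $\E[\log_2 g_{k^\star}]=\log_2\log K+O(1)$, whose upper half is free since $g_{k^\star}\le\max_j g_j$. For the rate metric, $\bar R_k=\E[\rho_k\mathbf{1}\{k^\star=k\}]$ gives the exact fixed-point identity $\Prob[k^\star=k]\,\E[\max_j\rho_j/\bar R_j\mid k^\star=k]=1$. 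This reduces the bound on the tilt to a concentration statement for the maximal metric.

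Finally, the $O(1)$ in Step 3 is not uniform as $\bar\gamma_k\to0$. Use $\log_2(1+b)+\log_2\min(a,1)\le\log_2(1+ab)\le\log_2(1+a)+\log_2(1+b)$ instead.
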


\begin{proof}
With static $s_k$, user~$k$'s SINR at slot~$t$ is $\gamma_k[t] \hsp{-2}{-2}{=} \mu_k \cdot g_k[t]$,
where $\mu_k \hsp{-2}{-2}{=} c_k\,s_k$ with $c_k \hsp{-2}{-2}{=} P\,r_k^{-\eta}/(I_k + \sigma^2)$ is constant across slots.
Since $g_k[t]$ is i.i.d.\ across slots, the EWMA converges almost surely to a user-dependent constant
$\bar{R}_k^\infty \hsp{-2}{-2}{=} \E[\log_2(1 + \mu_k\,g_k)]$, which depends on~$\mu_k$ but not on~$t$.
The PF selection $k^\star[t] \hsp{-2}{-2}{=} \arg\max_k \log_2(1 + \mu_k\,g_k[t]) / \bar{R}_k^\infty$ is therefore determined
entirely by the fast-fading realizations $\{g_k[t]\}_{k=1}^K$.
For any user~$k$ and threshold~$\tau > 0$,
\begin{equation*}
  \Prob\!\left[\frac{\log_2(1 + \mu_k\,g_k)}{\bar{R}_k^\infty} > \tau\right]
  = \Prob\!\left[g_k > \frac{2^{\tau \bar{R}_k^\infty} - 1}{\mu_k}\right],
\end{equation*}
which decays exponentially in~$\tau$ since $g_k \hsp{-2}{-2}{\sim} \Gamma(m, 1/m)$ has an exponential tail.
Hence, the per-user PF metric has an exponential-type upper tail for every~$\mu_k \hsp{-2}{-2}{>} 0$, placing the maximum over~$K$ users in the Gumbel domain.
The achievable rate $R_{k^\star}[t] \hsp{-2}{-2}{=} \log_2(1 + \mu_{k^\star}\,g_{k^\star}[t])$ consequently scales as $\log_2\!\log K$.
\end{proof}

Conversely, if the shadowing coherence time matches the slot duration (time-varying $s_k$), the PF metric does not normalize out the inverse-Gamma tail,
recovering the Fr\'echet scaling $\frac{1}{m_s}\log_2 K$.
This delineates a critical tradeoff: harvesting polynomial scaling requires either tolerating short-term unfairness (max-throughput) or operating
in high-mobility/body-area networks~\cite{Cotton2014} where heavy-tailed shadowing de-correlates rapidly.
In quasi-static environments, PF scheduling successfully enforces fairness but fundamentally sacrifices heavy-tailed diversity by design.

\vspace{-3mm}
\subsection{Extension to MIMO Random Beamforming}

The single-antenna scaling extends naturally to multi-antenna base stations employing random beamforming~\cite{Sharif2005}.

\begin{proposition}\label{prop:mimo-rb}
Consider an $M$-antenna BS transmitting over $M$ random orthonormal beams. Let ${\bm{w}_i, \dots, \bm{w}_M}$ denote the M random orthonormal beamforming vectors.
For user~$k$ on beam~$i$ under isotropic Rayleigh fading ($m \hsp{-2}{-2}{=} 1$, so that $|\bm{h}_k^H \bm{w}_i|^2 \hsp{-2}{-2}{\sim} \mathrm{Exp}(1)$), the effective channel gain $|\bm{h}_k^H \bm{w}_i|^2 s_k$ follows the identical $\mathcal{F}$ distribution as the SISO case.
Assuming negligible inter-beam interference (\Cref{rem:mimo-interbeam}), the per-beam SINR inherits the $-m_s$ tail index.
Consequently, selecting the best user per beam yields $\gamma_{\max}^{(i)} \hsp{-2}{-2}{\sim} K^{1/m_s}$ for $i \hsp{-2}{-2}{=} 1, \dots, M$,
and the overall sum-rate achieves an $M$-fold scaling improvement:
\vspace{-1mm}
\begin{equation}\label{eq:mimo-sum-rate}
    \ergcap_{\mathrm{sum}}(K, M) \sim \frac{M}{m_s}\,\log_2 K.
\end{equation}
\vspace{-6mm}
\end{proposition}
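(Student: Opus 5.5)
The argument reduces the MIMO claim to the single-antenna analysis, one beam at a time, and then sums over beams.

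\textbf{Step 1: distribution of the effective gain.} Under isotropic Rayleigh fading $\bm{h}_k\sim\mathcal{CN}(\mathbf{0},\mathbf{I}_M)$. Since $\bm{w}_i$ is unit-norm and drawn independently of $\bm{h}_k$, rotational invariance gives $\bm{h}_k^H\bm{w}_i\sim\mathcal{CN}(0,1)$. Hence $|\bm{h}_k^H\bm{w}_i|^2\sim\mathrm{Exp}(1)=\Gamma(1,1)$, independent of $s_k$. The effective gain $|\bm{h}_k^H\bm{w}_i|^2 s_k$ is therefore exactly the SISO composite gain of \Cref{lem:F-tail} with $m=1$. Its tail is $C_h x^{-m_s}$ with $C_h=(m_s-1)^{m_s}$, as noted after \Cref{thm:sinr-tail}.

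\textbf{Step 2: per-beam SINR tail and scaling.} I treat the inter-beam interference as negligible, per \Cref{rem:mimo-interbeam}. The power split $P/M$ per beam only rescales $c_k$. The per-beam SINR is then $(P/M)r_k^{-\eta}|\bm{h}_k^H\bm{w}_i|^2 s_k/(I_k+\sigma^2)$. The proof of \Cref{thm:sinr-tail} (scalar multiple in the noise-limited case, Breiman's lemma with finite negative moments of $I_k$ in the interference-limited case) applies verbatim and gives a tail $A^{(i)}x^{-m_s}$.

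For a fixed beam $i$ and fixed beamforming matrix, the users' SINRs are independent across $k$, since the $\bm{h}_k,s_k$ are i.i.d. They are also identically distributed under the same homogeneity assumption used in \Cref{thm:general-scaling}. \Cref{thm:general-scaling} then yields $\gamma_{\max}^{(i)}\sim(A^{(i)}K)^{1/m_s}$ and $\E[\log_2(1+\gamma_{\max}^{(i)})]\sim\frac{1}{m_s}\log_2 K$.

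\textbf{Step 3: summing over beams.} Linearity of expectation gives $\ergcap_{\mathrm{sum}}=\sum_{i=1}^M\E[\log_2(1+\gamma_{\max}^{(i)})]$. Each term is $\frac1{m_s}\log_2K+O(1)$, so the sum is $\frac{M}{m_s}\log_2K+O(1)$, which is \eqref{eq:mimo-sum-rate}. This step does not require independence across beams. Several beams may pick the same user; scheduling one user on several beams, or choosing a distinct second-best user, changes each term only by $O(1)$, because the second-largest order statistic also scales as $K^{1/m_s}$ in the Fr\'echet domain.

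\textbf{Main obstacle.} The hard step is justifying that the per-beam gains behave as in the SISO case once the beams are coupled. A single user's $M$ beam gains share the same $s_k$, so they are correlated across $i$ but not across $k$. Only within-beam independence over $k$ is needed, so I would emphasize that explicitly.

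If the inter-beam interference is retained rather than neglected, the SINR becomes $|\bm{h}_k^H\bm{w}_i|^2 s_k/(\sum_{j\ne i}|\bm{h}_k^H\bm{w}_j|^2 s_k+\cdots)$. The common factor $s_k$ then cancels in the high-SNR limit and destroys the heavy tail. That is why the assumption of \Cref{rem:mimo-interbeam} is essential, and I would flag it as the point where the argument relies on the modeling assumption rather than on a derivation.
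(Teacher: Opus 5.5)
Your proposal is correct and follows the paper's own argument: you identify $|\bm{h}_k^H\bm{w}_i|^2 s_k$ as the $m=1$ case of \Cref{lem:F-tail}, run the SISO tail and extreme-value analysis (\Cref{thm:sinr-tail}, \Cref{thm:general-scaling}) beam by beam, and add the $M$ per-beam terms. Your appeal to linearity of expectation is a cleaner justification than the paper's statement that the beams ``contribute independently,'' because the shared $s_k$ does couple them. The one slip is in your closing aside: when inter-beam interference dominates and $s_k$ cancels, $\gamma_{k,i}$ becomes the alignment ratio $R_{k,i}$, which is still power-law tailed with index $M-1$ (\Cref{rem:mimo-interbeam}), so what is lost is the shadowing-driven index $m_s$, not heavy-tailedness itself.
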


\begin{proof}
Since $|\bm{h}_k^H \bm{w}_i|^2$ and $s_k$ are independent, with the former $\mathrm{Exp}(1)$ and the latter $\InvGamma(m_s, m_s-1)$,
their product follows the $\mathcal{F}(2, 2m_s)$ distribution by the same argument as \Cref{lem:F-tail}.
The per-beam extreme-value analysis is identical to the SISO case, and the $M$ beams contribute independently to the sum rate.
\end{proof}

\begin{remark}[Gumbel vs. Fr\'echet MIMO Scaling]\label{rem:mimo-gumbel}
While Rayleigh fading yields $\ergcap_{\mathrm{sum}} \hsp{-2}{-2}{\sim} M \log_2\!\log K$, the Fr\'echet regime achieves $(M/m_s)\log_2 K$.
Both maintain the $M$-fold multiplexing gain, but Fr\'echet elevates the base scaling from doubly logarithmic to logarithmic.
Consequently, for $M\hsp{-2}{-2}{=}4, m_s\hsp{-2}{-2}{=}2$, and $K\hsp{-2}{-2}{=}100$,
Fr\'echet delivers $\approx 13.3$\,bits/s/Hz versus Gumbel's $\approx 8.8$\,bits/s/Hz, a roughly 50\% capacity increase.
\end{remark}

\begin{remark}[Inter-beam Interference]\label{rem:mimo-interbeam}
\Cref{prop:mimo-rb} assumes external interference or noise dominates inter-beam leakage.
Generally, the per-beam SINR is $\gamma_{k,i} \hsp{-2}{-2}{=} |\bm{h}_k^H \bm{w}_i|^2 s_k / (\sum_{j \neq i} |\bm{h}_k^H \bm{w}_j|^2 s_k + I_k + \sigma^2)$.
If inter-beam interference dominates, $s_k$ cancels, reducing $\gamma_{k,i}$ to the spatial alignment ratio $R_{k,i} \hsp{-2}{-2}{=} |\bm{h}_k^H \bm{w}_i|^2 / \sum_{j \neq i} |\bm{h}_k^H \bm{w}_j|^2$.
For isotropic Rayleigh fading, $\Prob[R_{k,i} > t] \hsp{-2}{-2}{\sim} t^{-(M-1)}$ \cite{Sharif2005}.
Hence, the system remains in the Fr\'echet domain with two distinct regimes: if $m_s \hsp{-2}{-2}{<} M-1$, the heavier shadowing tail dominates,
yielding the $\frac{M}{m_s}\log_2 K$ scaling; if $m_s \hsp{-2}{-2}{\ge} M-1$, spatial alignment dictates a $\frac{M}{M-1}\log_2 K$ scaling.
Therefore, \Cref{prop:mimo-rb} holds whenever $m_s \hsp{-2}{-2}{<} M-1$ or external interference prevents $s_k$ cancellation.
\end{remark}

\vspace{-3mm}
\section{Numerical Results}\label{sec:numerical}
\vspace{-1mm}

In this section, we validate the analytical results via Monte Carlo simulations.
We evaluate $K \hsp{-2}{-2}{\in} [10, 5000]$ users experiencing independent composite fading with
Rayleigh multipath ($m \hsp{-2}{-2}{=} 1$) and varying shadowing shapes $m_s \hsp{-2}{-2}{\in} \{1.5, 2, 3, 5\}$.
In the noise-limited regime, the baseline SNR is $10$\,dB, while comparative lognormal shadowing uses $\sigma_{\mathrm{dB}} \hsp{-2}{-2}{=} 8$\,dB.
In the interference-limited regime, aggregate interference $I$ is shared per time slot among all users in the cell ($\gamma_k \hsp{-2}{-2}{=} h_k / I$).
Across independent trials, $I$ is generated as a totally skewed $\alpha$-stable variable (index $\alpha_I \hsp{-2}{-2}{=} 2/\eta \hsp{-2}{-2}{=} 0.5$
for path-loss $\eta \hsp{-2}{-2}{=} 4$, scale $\sigma_I \hsp{-2}{-2}{=} 0.1$) using the Chambers--Mallows--Stuck algorithm~\cite{CMS1976,Win2009}.
This shared-interference model preserves the conditional i.i.d.\ structure and the Fr\'echet scaling exponent~$m_s$.%
\footnote{Allowing spatially varying $I_k$ per user does not change the scaling exponent,
since \Cref{thm:sinr-tail} applies per-user with user-dependent~$A_k$.}
To manage computational load, trials scale inversely with $K$ from $20{,}000$ to $1{,}000$, with $95\%$ bootstrap confidence intervals ($500$ resamples) computed for all estimates.

\vspace{-3mm}
\subsection{Scaling of SINR and Ergodic Capacity}
\vspace{-1mm}

\begin{figure}[t]
\centering
\includegraphics[width=\myfigwidth]{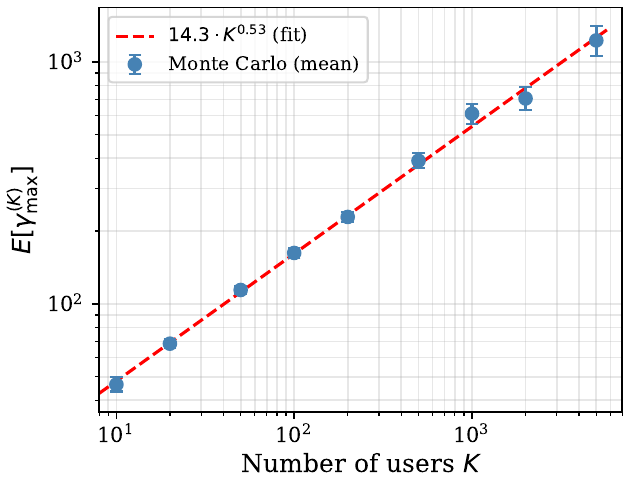}
\vspace{-2mm}
\caption{Monte Carlo validation of $\SINR_{\max}^{(K)}$ scaling for $m_s = 2$, $m = 1$, $\eta = 4$
(interference-limited, $1\text{k}$--$20\text{k}$ trials per point). Error bars show $95\%$ bootstrap confidence intervals.
The dashed line shows the fitted power law $14.3\,K^{0.53}$, confirming the theoretical $\sqrt{K}$ scaling
(fitted exponent $0.53$ vs.\ predicted $0.50$).}
\label{fig:sinr-scaling}
\vspace{-4mm}
\end{figure}

\Cref{fig:sinr-scaling} compares the Monte Carlo average of $\gamma_{\max}^{(K)}$ against a $C \hsp{-2}{-2}{\cdot} K^{0.53}$ power-law fit for $m_s = 2$.
The $0.53$ exponent closely matches the predicted $0.50$, with the slight bias arising from the Fr\'echet($2$) infinite-variance effect inflating the sample mean.
This monotonic growth across three decades of~$K$ confirms the $\sqrt{K}$ scaling derived in \Cref{cor:sqrtK}.

\Cref{fig:capacity-regimes} contrasts ergodic capacity scaling:
(a)~Rayleigh noise-limited ($C \hsp{-2}{-2}{\sim} \log_2\!\log K$);
(b)~$\mathcal{F}$-fading noise-limited ($C \hsp{-2}{-2}{\sim} \tfrac{1}{2}\log_2 K$);
and (c)~$\mathcal{F}$-fading interference-limited ($C \hsp{-2}{-2}{\sim} \tfrac{1}{2}\log_2 K$).
The two $\mathcal{F}$-fading curves share identical polynomial growth rates, differing only by a constant offset,
while the Rayleigh curve scales significantly slower.

\Cref{thm:general-scaling} explains the $K$-independent constant vertical gap,
$\Delta C \hsp{-2}{-2}{=} \frac{1}{m_s}\log_2 \frac{A_{\text{noise}}}{A_{\text{interf}}}$, between the $\mathcal{F}$-fading curves.
Both follow $C(K) \hsp{-2}{-2}{=} \frac{1}{m_s}\log_2 K \hsp{-1}{-1}{+} \frac{1}{m_s}\log_2 A \hsp{-1}{-1}{+} \E[\log_2 Z]$, diverging only in the parameter~$A$.
The SNR-dependent $A_{\text{noise}} \hsp{-2}{-2}{=} C_h\,(P r_k^{-\eta}/\sigma^2)^{m_s}$ exceeds the SIR-dependent
$A_{\text{interf}} \hsp{-2}{-2}{=} C_h\,\E[(P r_k^{-\eta}/I_k)^{m_s}]$ (with $C_h$ defined in~\eqref{eq:Ch}) because
dividing by small, deterministic $\sigma^2$ yields a larger value than dividing by random, heavy-tailed $I_k$.
Importantly, the $1/m_s$ slope remains invariant to the interference regime, as established in \Cref{thm:sinr-tail}.

\begin{figure}[t]
\centering
\includegraphics[width=\myfigwidth]{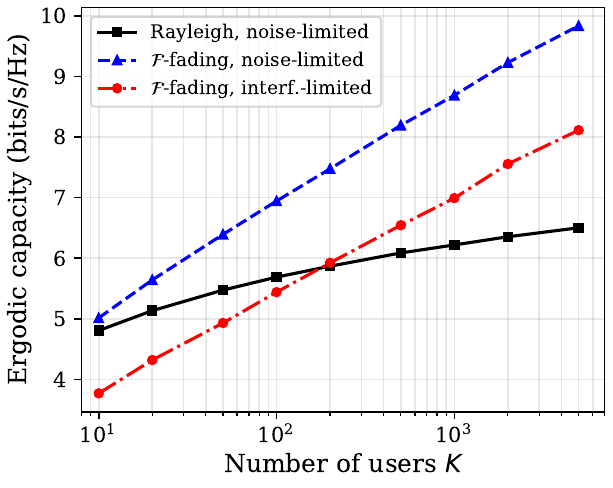}
\vspace{-2mm}
\caption{Ergodic capacity vs.\ $K$ under three channel/interference models.
The $\mathcal{F}$-fading curves ($m_s = 2$) grow as
$\tfrac{1}{2}\log_2 K$, dramatically outpacing the Rayleigh
$\log_2\!\log_2 K$ growth.  SNR${}= 10$\,dB for noise-limited curves.}
\label{fig:capacity-regimes}
\vspace{-3mm}
\end{figure}

\vspace{-4mm}
\subsection{Capacity Sensitivity to Shadowing Parameter $m_s$}

\Cref{fig:ms-sensitivity} plots simulated ergodic capacity versus $K$ across shadowing parameters $m_s$.
Confirming \Cref{thm:general-scaling}, heavier shadowing (smaller $m_s$) drives steeper $\frac{1}{m_s}\log_2 K$ capacity growth.
The fitted slopes - $0.63$ ($m_s=1.5$; theory $0.67$), $0.49$ ($m_s=2$; theory $0.50$), $0.35$ ($m_s=3$; theory $0.33$), and $0.27$ ($m_s=5$; theory $0.20$) -
match theory within $\sim 6\%$ for $m_s \hsp{-2}{-2}{\le} 3$, where the Fr\'echet regime establishes early ($K^\star \hsp{-2}{-2}{\le} 8$, per \Cref{tab:crossovers}).

Deviations occur at the extremes. For $m_s \hsp{-2}{-2}{=} 5$, lighter shadowing yields a thinner inverse-Gamma tail,
delaying extreme Fr\'echet events to $K^\star \hsp{-2}{-2}{\approx} 256$; this keeps the simulation in a pre-asymptotic (Gumbel-like) regime,
biasing the slope upward.
Conversely, for heavy shadowing ($m_s \hsp{-2}{-2}{\le} 2$), infinite Fr\'echet variance slows sample mean convergence, causing a slight downward bias.

Finally, the Rayleigh baseline ($\log_2\!\log K$ growth) is overtaken by the $m_s \hsp{-2}{-2}{\le} 3$ cases within the simulated range.
For $m_s \hsp{-2}{-2}{=} 5$, however, $\mathcal{F}$-fading starts lower due to reduced mean channel quality;
despite its asymptotically faster $\tfrac{1}{5}\log_2 K$ slope, absolute crossover with Rayleigh requires $K$ well beyond the simulated range.

\begin{figure}[t]
\centering
\includegraphics[width=\myfigwidth]{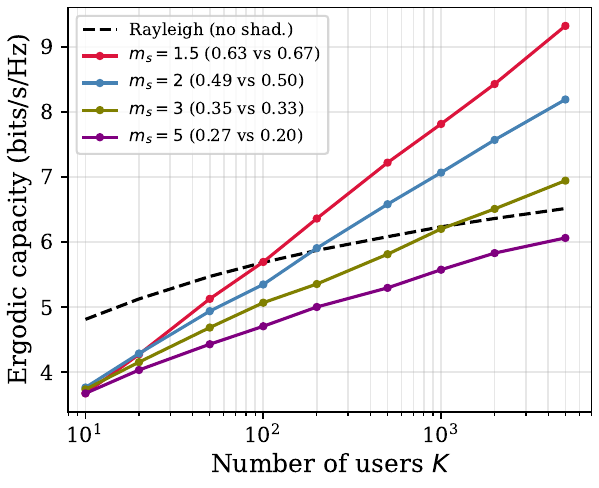}
\vspace{-2mm}
\caption{Ergodic capacity vs.\ $K$ for varying shadowing $m_s \in \{1.5, 2, 3, 5\}$ (interference-limited, $m=1$, $\eta=4$).
Heavier shadowing (smaller $m_s$) drives steeper $\frac{1}{m_s}\log_2 K$ growth compared to the $\log_2\!\log K$ Rayleigh baseline (black dashed).}
\label{fig:ms-sensitivity}
\vspace{-2mm}
\end{figure}

\vspace{-3mm}
\subsection{Comparison with Lognormal Shadowing}

A key practical question is whether the polynomial scaling persists under the industry-standard lognormal shadowing model.
\Cref{fig:lognormal-comparison} compares the ergodic capacity for four models in the noise-limited regime:
(a)~Rayleigh (no shadowing),
(b)~Nakagami-$m$ $\times$ Lognormal with $\sigma_{\mathrm{dB}} = 8$\,dB (a standard 3GPP value),
(c)~$\mathcal{F}$-fading with $m_s = 2$, and
(d)~$\mathcal{F}$-fading with $m_s = 1.5$.

\begin{figure}[t]
\centering
\includegraphics[width=\myfigwidth]{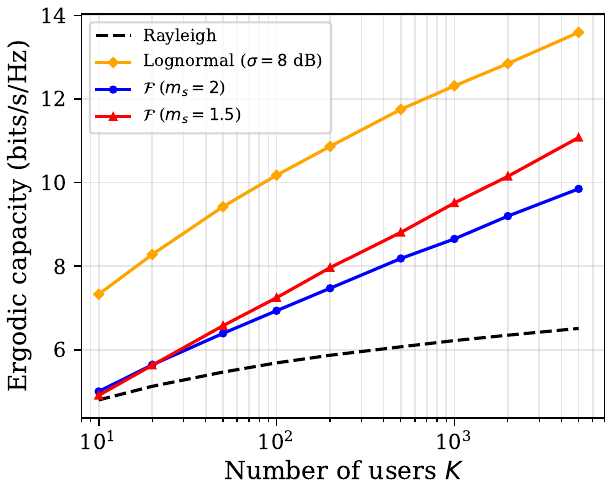}
\vspace{-2mm}
\caption{Capacity comparison: lognormal shadowing ($\sigma = 8$\,dB) vs.\ $\mathcal{F}$-fading (noise-limited, $m = 1$, SNR${}=10$\,dB).
Lognormal achieves higher absolute capacity for $K \le 5000$ due to large variance, but eventually slows to $\log_2\!\log K$ (Gumbel domain).
Conversely, $\mathcal{F}$-fading sustains $\frac{1}{m_s}\log_2 K$ asymptotic growth.}
\label{fig:lognormal-comparison}
\vspace{-2mm}
\end{figure}

At finite user counts (e.g., $K \hsp{-2}{-2}{\le} 5000$), lognormal shadowing ($\sigma_{\mathrm{dB}} \hsp{-2}{-2}{=} 8$\,dB) achieves higher absolute
capacity than $\mathcal{F}$-fading ($m_s \hsp{-2}{-2}{=} 2$) due to its large variance, which creates substantial diversity.
However, their asymptotic scaling rates differ fundamentally. The lognormal model belongs to the Gumbel domain~\cite[Sec.~3.3]{Embrechts1997};
the maximum of $K$ i.i.d.\ copies scales as $\exp(\sigma\sqrt{2\ln K})$, where $\sigma \hsp{-2}{-2}{=} \sigma_{\mathrm{dB}} \ln 10 / 10$.
Thus, its ergodic capacity grows sub-logarithmically:
\vspace{-1mm}
\begin{equation}\label{eq:C-LN}
  \ergcap_{\mathrm{LN}}(K) \approx \frac{\sigma\sqrt{2\ln K}}{\ln 2} + \text{const}.
\end{equation}
\vspace{-1mm}

While the lognormal capacity slope decays as $\sigma/\sqrt{2\ln K}$, the $\mathcal{F}$-fading slope remains a constant $1/m_s$ (\Cref{thm:general-scaling}).
Setting $\sigma/\sqrt{2\ln K} = 1/m_s$ and solving yields the \emph{slope crossover} user count:
\vspace{-1mm}
\begin{equation}\label{eq:K-slope-cross}
  K_{\mathrm{slope}} = \exp\Bigl(\frac{\sigma_{\mathrm{dB}}^2\, m_s^2\, \ln^2\!10}{200}\Bigr).
\end{equation}
\vspace{-1mm}

\begin{table}[t]
\centering
\caption{Slope ($K_{\mathrm{slope}}$) and Absolute ($K_{\mathrm{abs}}$) Capacity Crossovers}
\label{tab:crossovers}
\vspace{-2mm}
\footnotesize
\begin{tabular}{@{}ccrr@{}}
\toprule
$\sigma_{\mathrm{dB}}$ & $m_s$ & $K_{\mathrm{slope}}$ & $K_{\mathrm{abs}}$ \\
\midrule
$4$ & $1.5$ & $3$ & $\approx 200$ \\
$4$ & $2$ & $6$ & $\approx 1{,}500$ \\
$4$ & $3$ & $46$ & -- \\
$4$ & $5$ & $40{,}000$ & -- \\
\midrule
$8$ & $1.5$ & $46$ & $> 20{,}000$ \\
$8$ & $2$ & $890$ & $> 20{,}000$ \\
$8$ & $3$ & $4.3 \times 10^{6}$ & -- \\
\bottomrule
\end{tabular}
\vspace{-2mm}
\end{table}

\Cref{tab:crossovers} evaluates $K_{\mathrm{slope}}$ alongside $K_{\mathrm{abs}}$, the \emph{absolute capacity} crossover where
$\ergcap_{\mathcal{F}} \hsp{-2}{-2}{>} \ergcap_{\mathrm{LN}}$.
For moderate shadowing ($\sigma_{\mathrm{dB}} \hsp{-2}{-2}{=} 4$\,dB), $\mathcal{F}$-fading overtakes the lognormal rate and absolute capacity at practical user counts.
However, for the standard 3GPP value ($\sigma_{\mathrm{dB}} \hsp{-2}{-2}{=} 8$\,dB), strong lognormal shadowing delays the crossover;
the lognormal model maintains a higher absolute capacity beyond $K \hsp{-2}{-2}{=} 20{,}000$.
Although the $\mathcal{F}$-fading curve will eventually overtake it as the lognormal slope approaches zero,
this absolute crossover requires user counts well beyond practical simulated ranges.

\vspace{-2mm}
\subsection{Proportional-Fair Scheduling Validation}

\Cref{fig:pf-scheduling} validates \Cref{prop:pf-static} by comparing the ergodic capacity under three
scheduling scenarios with $\mathcal{F}$-fading ($m_s \hsp{-2}{-2}{=} 2$, $m \hsp{-2}{-2}{=} 1$, interference-limited):
(a)~max-throughput scheduling (select $\arg\max_k \gamma_k$), (b)~PF scheduling with static shadowing ($s_k$ fixed per user), and
(c)~PF scheduling with time-varying shadowing ($s_k$ redrawn each slot).
The PF window is set to $T_c \hsp{-2}{-2}{=} 5K$ to ensure each user's moving average converges.

\begin{figure}[t]
\centering
\includegraphics[width=\myfigwidth]{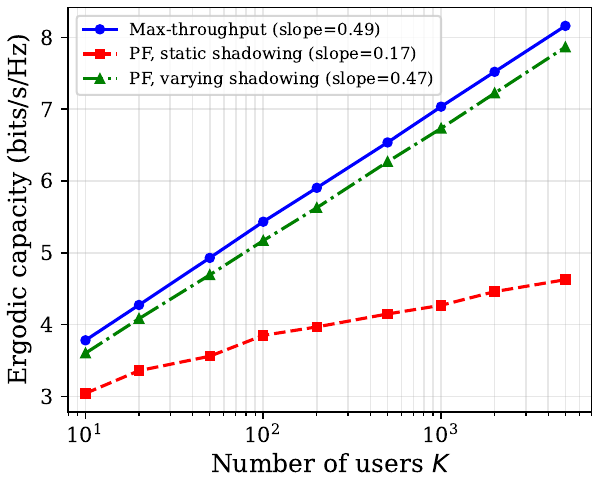}
\vspace{-2mm}
\caption{Ergodic capacity vs.\ $K$ under three scheduling scenarios
($m_s = 2$, $\eta = 4$, interference-limited, PF window $T_c = 5K$).
Max-throughput and PF with varying shadowing both exhibit Fr\'echet scaling
(slopes $0.49$ and $0.47$), while PF with static shadowing reverts to
near-flat Gumbel growth (slope $0.17$), confirming \Cref{prop:pf-static}.}
\label{fig:pf-scheduling}
\vspace{-3mm}
\end{figure}

The max-throughput curve (slope $0.49$) and PF with varying shadowing (slope $0.47$) both closely track the
theoretical $\frac{1}{m_s}\log_2 K$ scaling, confirming that the Fr\'echet diversity gain is accessible
when shadowing varies across scheduling slots (\Cref{subsec:sched_fair}).
In contrast, PF with static shadowing grows at a slower rate (slope $0.17$): the PF metric
normalizes out the fixed inverse-Gamma component, leaving only the exponential-tailed multipath fading~$g_k$
to drive user selection.
The small residual slope ($0.17$ vs.\ the theoretical $\log_2\!\log K \approx 0$) reflects the logarithmic
Gumbel growth at finite~$K$.

\vspace{-3mm}
\subsection{MIMO Random Beamforming Validation}
\vspace{-1mm}

\Cref{fig:mimo} validates the MIMO random beamforming result of \Cref{prop:mimo-rb} for $m_s \hsp{-2}{-2}{=} 2$ and $M \hsp{-2}{-2}{\in} \{1, 2, 4\}$ antennas.
The sum-rate is expected to scale as $(M/2)\log_2 K$. Inter-beam interference is \emph{not} included in the simulation
(each beam is treated as an independent SISO channel); this is justified by the concentrating-measure argument that for $K \hsp{-2}{-2}{\gg} M$,
the best user per beam is nearly orthogonal to the other beams~\cite{Sharif2005}.
The $K$ range is limited to $[10, 1000]$.

The fitted slopes are $0.52$ ($M \hsp{-2}{-2}{=} 1$; theory: $0.50$), $1.02$ ($M \hsp{-2}{-2}{=} 2$; theory: $1.00$),
and $1.92$ ($M \hsp{-2}{-2}{=} 4$; theory: $2.00$).
All values within $4\%$ of the predicted values, thus confirming the $M$-fold scaling predicted by \Cref{prop:mimo-rb}.
The slight undershoot for $M \hsp{-2}{-2}{=} 4$ ($1.92$ vs $2.00$) is consistent with finite-$K$ effects at the lower end of the
range ($K \hsp{-2}{-2}{=} 10$ with $M \hsp{-2}{-2}{=} 4$ gives only $2.5$ effective users per beam).
The close overall agreement validates the MIMO extension and demonstrates that the Fr\'echet diversity gain compounds
multiplicatively with the number of antennas.

\begin{figure}[t]
\centering
\includegraphics[width=\myfigwidth]{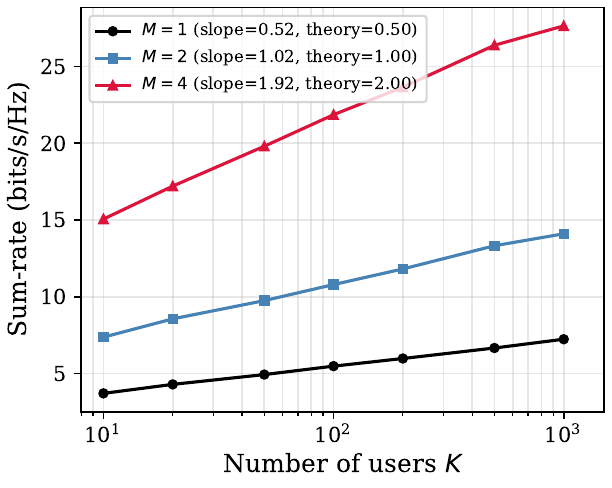}
\vspace{-2mm}
\caption{MIMO random beamforming sum-rate vs.\ $K$ for $M \in \{1, 2, 4\}$ antennas ($m_s = 2$, interference-limited, $800$--$3000$ trials per point).
The fitted slopes closely match the predicted $(M/m_s)\log_2 K$ scaling. Bootstrap $95\%$ CIs have relative width $<4\%$ for all points and are omitted for clarity.}
\label{fig:mimo}
\vspace{-3mm}
\end{figure}

\vspace{-3mm}
\section{Discussion}\label{sec:discussion}

\vspace{-1mm}
\subsection{Gumbel-Fr\'echet Phase Transition}
\label{subsec:transition}

As the shadowing parameter $m_s \hsp{-2}{-2}{\to} \infty$, the inverse-Gamma shadowing concentrates around its mean.
By Slutsky's theorem~\cite{Billingsley1999}, the composite channel converges in distribution to Nakagami-$m$ fading (Gumbel domain).
However, for any finite~$m_s$, the inverse-Gamma component retains its power-law tail, keeping the system strictly in the Fr\'echet domain.
The $\frac{1}{m_s}\log_2 K$ scaling does not degrade continuously into the Gumbel $\log_2\!\log K$ law;
the theoretical domain transition is discrete at $m_s \hsp{-2}{-2}{=} \infty$.

What changes continuously is the \emph{practical relevance} of the Fr\'echet asymptotics.
As $m_s$ grows, the power-law tail is pushed to increasingly extreme values, requiring super-exponentially more users to observe it.

\begin{proposition}[Critical user count for Fr\'echet onset]\label{prop:crossover}
  The power-law tail approximation $\Prob[h > x] \hsp{-2}{-2}{\sim} C\,x^{-m_s}$ holds for $x \hsp{-2}{-2}{\gg} x_0$,
  where $x_0 \hsp{-2}{-2}{=} (m_s - 1)/m$ is the mode of the inverse-Gamma distribution.
  The Fr\'echet normalizer $a_K \hsp{-2}{-2}{=} (AK)^{1/m_s}$ must satisfy $a_K \hsp{-2}{-2}{\ge} x_0$.
  Solving for $K$ yields the critical user count:
  \begin{equation}\label{eq:K-crossover}
    K^\star \sim \frac{1}{A}\!\left(\frac{m_s - 1}{m}\right)^{\!m_s}.
  \end{equation}
\end{proposition}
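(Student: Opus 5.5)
The statement is essentially a heuristic threshold computation, so the plan is to make each of its two ingredients precise and then solve the resulting inequality for $K$.

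\emph{Step 1: justify the scale $x_0$.} I will start from the exact density in the proof of \Cref{lem:F-tail}:
\[
f_h(x)\propto \frac{x^{m-1}}{\bigl(1+\tfrac{m}{m_s-1}x\bigr)^{m+m_s}}.
\]
Write $u=\frac{m}{m_s-1}x$. The ratio of the exact density to its power-law asymptote is $(1+1/u)^{-(m+m_s)}$. This ratio tends to one exactly when $u\gg 1$, that is, when $x\gg (m_s-1)/m=x_0$.

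The same conclusion comes from integrating. By the Beta-function representation,
\[
\Prob[h>x]=\frac{\Gamma(m+m_s)}{\Gamma(m)\Gamma(m_s)}\int_0^{1/(1+u)} t^{m_s-1}(1-t)^{m-1}\,dt,
\]
which equals $C_h x^{-m_s}\bigl(1+O(1/u)\bigr)$. So $x_0$ is the natural crossover scale. It is the scale parameter $(m_s-1)$ of the shadowing, divided by $m$, and it is of the order of the inverse-Gamma mode $(m_s-1)/(m_s+1)$ up to $m_s$-dependent factors. I will adopt the statement's identification of $x_0$ as given and record the relative error $O\bigl(x_0/x\bigr)$ as the quantitative meaning of ``$\gg$''.

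\emph{Step 2: where the maximum lives.} By \Cref{thm:frechet} and \Cref{thm:general-scaling}, $\gamma_{\max}^{(K)}/a_K\xrightarrow{d} Z$. The maximum therefore sits at level $a_K$, and the Fréchet description is meaningful only if the survival function evaluated at level $a_K$ lies in the regime where the power law from Step 1 is accurate. This gives the onset condition $a_K=(AK)^{1/m_s}\ge x_0$.

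\emph{Step 3: solve for $K$.} Raising the onset condition to the power $m_s$ gives $AK\ge x_0^{m_s}$, so
\[
K\ge \frac{1}{A}\left(\frac{m_s-1}{m}\right)^{m_s},
\]
which is \eqref{eq:K-crossover}. The symbol ``$\sim$'' is to be read as order-of-magnitude equivalence, since the threshold ``$\gg$'' is only defined up to constants. In the noise-limited normalization, $A=C_h$ with unit SNR. Substituting $C_h$ from \eqref{eq:Ch} cancels the factor $\bigl(\tfrac{m_s-1}{m}\bigr)^{m_s}$ and leaves $K^\star\sim \Gamma(m)\Gamma(m_s)m_s/\Gamma(m+m_s)$. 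I will note this simplification as a consistency check; it grows in $m_s$ for fixed $m$.

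\emph{Main obstacle.} The difficult part is Step 1, which is to turn ``the asymptotic holds for $x\gg x_0$'' into a statement with a controlled error. A single threshold is inherently heuristic. I would make it precise by bounding $|\Prob[h>x]/(C_hx^{-m_s})-1|\le c\,(m+m_s)x_0/x$ using the expansion of $(1+1/u)^{-(m+m_s)}$. This shows the relative error is at most a fixed $\delta$ once $a_K\ge c'x_0$. That recovers $K^\star$ up to the constant $(c')^{m_s}$, which is absorbed in the ``$\sim$''.
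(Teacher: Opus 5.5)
Your Steps 2 and 3 reproduce the paper's argument exactly: the paper gives no separate proof, and simply imposes $a_K\ge x_0$ and solves for $K$. Your Step 1 justifies $x_0$ better than the paper does. The density ratio $(1+1/u)^{-(m+m_s)}$ and the Beta-integral representation are both correct. The integral bound gives your relative-error estimate with $c=1$, namely $\bigl|\Prob[h>x]/(C_hx^{-m_s})-1\bigr|\le (m+m_s-1)x_0/x$. The paper calls $x_0$ the inverse-Gamma mode, which is inaccurate: that mode is $(m_s-1)/(m_s+1)$. You are right to place $x_0$ at the knee $u=1$ of the $\mathcal{F}$ density instead. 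The mismatch is not a harmless constant, because it gets raised to the power $m_s$.

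Two of your side claims are wrong, however.

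First, your consistency check. You get $x_0^{m_s}/C_h=m_sB(m,m_s)=\Gamma(m)\Gamma(m_s+1)/\Gamma(m+m_s)$. This equals $1$ for every $m_s$ when $m=1$, and is strictly decreasing in $m_s$ when $m>1$ (for example $1/(m_s+1)$ when $m=2$). So it does not grow in $m_s$. The computation actually exposes a scale mismatch in the criterion itself. For $\gamma=c\,h$ the knee of the tail sits at $c\,x_0$, while $a_K=c\,(C_hK)^{1/m_s}$. The scale-consistent condition is therefore $K\ge m_sB(m,m_s)$, and since $m_sB(m,m_s)\le 1$ it is vacuous. The growth $(m_s-1)^{m_s}$ quoted in the paper's discussion comes entirely from setting $A=1$ while still measuring $x_0$ on the scale of $h$.

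Second, your refinement. The constant there is $c'\approx(m+m_s-1)/\delta$. It must be at least of order $m_s/\delta$, because the relative error is at least $1-(1+1/u)^{-m_s}\approx m_s/u$. Hence $(c')^{m_s}$ is super-exponential in $m_s$. It cannot be absorbed into the $\sim$ of the stated formula, whose whole purpose is to describe how $K^\star$ depends on $m_s$. Given the first point, this discarded factor is exactly what carries any genuine growth of the onset count with $m_s$.
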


Evaluating \eqref{eq:K-crossover} (for $m \hsp{-2}{-2}{=} 1$ and normalizing $A \hsp{-2}{-2}{=} 1$ for illustration) reveals clear practical boundaries.
For heavy shadowing ($m_s \hsp{-2}{-2}{\le} 3$), $K^\star \hsp{-2}{-2}{\le} 8$, meaning Fr\'echet asymptotics are accurate almost immediately.
For $m_s \hsp{-2}{-2}{=} 5$, $K^\star \hsp{-2}{-2}{\approx} 1024$ places the crossover at the edge of practical relevance (e.g., dense stadium deployments);
the pre-asymptotic regime (often studied via ``penultimate approximations''~\cite{Resnick2007}) still contaminates much of the operating range.
Finally, for lighter shadowing ($m_s \hsp{-2}{-2}{=} 10$), $K^\star \hsp{-2}{-2}{\approx} 3.5 \hsp{-2}{-2}{\times} 10^9$ is astronomically large,
so the system behaves effectively as if in the Gumbel domain.

In summary, the polynomial diversity gain is largest in severe shadowing environments ($m_s \hsp{-2}{-2}{\le} 3$),
where baseline channel quality is poorest and diversity is most needed.
Conversely, for $m_s \hsp{-2}{-2}{\ge} 10$, Fr\'echet asymptotics are irrelevant at practical~$K$,
but the baseline capacity is already high, making the slower Gumbel scaling less consequential.

\vspace{-2mm}
\subsection{Which Shadowing Models Lead to Fr\'echet Scaling?}

The results in this work stem entirely from the inverse-Gamma shadowing's power-law tail.
\Cref{tab:scaling-comparison} classifies widely used composite models by their EVT domain and summarizes the resulting capacity scaling across representative interference scenarios.
Crucially, Gamma and lognormal shadowing preserve the slow Gumbel scaling ($\log_2\!\log_2 K$), whereas Pareto-type shadowing (e.g., inverse-Gamma)
yields Fr\'{e}chet scaling regardless of whether the system is noise- or interference-limited. The multi-path component affects only the tail constant, never the index.

\begin{table*}[t]
\centering
\caption{Tail Classification and Capacity Scaling Across Various Environments}
\label{tab:scaling-comparison}
\vspace{-2mm}
\begin{tabular}{@{}lllll@{}}
\toprule
\textbf{Channel Model} & \textbf{Shadowing (Tail)} & \textbf{Interference} & \textbf{Domain} & \textbf{Scaling} \\
\midrule
Rayleigh / Nakagami-$m$ & None (Exp.) & Noise-limited & Gumbel & $\log_2\!\log_2 K$ \\
Nak-$m$ $\times$ Gamma & Gamma (Exp.) & Noise-limited & Gumbel & $\log_2\!\log_2 K$ \\
Nak-$m$ $\times$ Lognorm. & Lognormal (Sub-exp.) & Noise-limited & Gumbel & $\log_2\!\log_2 K$ \\
\midrule
$\mathcal{F}$ / $\kappa$-$\mu$ shad. & Inv-Gamma ($x^{-m_s}$) & Noise-limited & Fr\'echet & $\tfrac{1}{m_s}\log_2 K$ \\
$\mathcal{F}(2,2m_s)$ & Inv-Gamma ($x^{-m_s}$) & Poisson PPP & Fr\'echet & $\tfrac{1}{m_s}\log_2 K$ \\
\bottomrule
\end{tabular}
\vspace{-3mm}
\end{table*}

\vspace{-3mm}
\subsection{Energy Efficiency and Power Scaling}

The heavy-tailed multi-user diversity gain also enables power reduction while maintaining quality of service.

\begin{corollary}\label{cor:power-scaling}
In the noise-limited regime, to maintain a target per-user rate $R_0$ as $K$ grows, the required transmit power $P(K)$ scales as
$P(K) \hsp{-2}{-2}{\sim} K^{-1/m_s}$, namely, the Fr\'echet regime.
\end{corollary}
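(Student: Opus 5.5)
The plan is to work entirely in the noise-limited regime and read the result off the SINR scaling of \Cref{thm:general-scaling}, tracking the dependence on $P$. In the noise-limited case $\gamma_k = (P r_k^{-\eta}/\sigma^2)\,h_k$ is linear in $P$. \Cref{thm:sinr-tail} then gives the tail constant explicitly as $A_k = C_h (P r_k^{-\eta}/\sigma^2)^{m_s}$. Assuming a common distance $r$ for all users (or absorbing the geometry into a $P$-independent factor), this is $A(P) = A_1 P^{m_s}$ with $A_1 = C_h (r^{-\eta}/\sigma^2)^{m_s}$ independent of $K$ and $P$. Substituting into \eqref{eq:sinr-general}:
\begin{equation*}
\gamma_{\max}^{(K)} \sim (A_1 P^{m_s} K)^{1/m_s} = P\,(A_1 K)^{1/m_s}.
\end{equation*}
The same identity can be seen directly: the normalized maximum $\gamma_{\max}^{(K)}/P$ has a law that does not depend on $P$.

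Next, I would make the rate requirement precise. Using the capacity argument in the proof of \Cref{thm:general-scaling}, the decomposition \eqref{eq:ergcap_decomp} gives
\begin{equation*}
\ergcap(K;P) = \log_2 P + \tfrac{1}{m_s}\log_2(A_1 K) + \E[\log_2 Z] + o(1).
\end{equation*}
This holds as long as $a_K = P(A_1K)^{1/m_s} \to \infty$, so that the uniform-integrability step goes through. Setting $\ergcap(K;P(K)) = R_0$ and solving yields
\begin{equation*}
\log_2 P(K) = R_0 - \tfrac{1}{m_s}\log_2(A_1 K) - \E[\log_2 Z] + o(1),
\end{equation*}
that is, $P(K) = c\,K^{-1/m_s}(1+o(1))$ with $c = 2^{R_0 - \E[\log_2 Z]} A_1^{-1/m_s}$. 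This is the claimed $K^{-1/m_s}$ scaling. An alternative target is a fixed SNR level $\gamma_0$ for the scheduled user, in probability. In that case one simply sets $P(A_1K)^{1/m_s} = \gamma_0$, which gives the same exponent immediately and is probably the reading the authors intend.

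The main obstacle is the regime mismatch in the ergodic-rate version. If $P$ is chosen so that the rate stays fixed, then $a_K = P(K)(A_1K)^{1/m_s}$ stays bounded instead of diverging. The $o(1)$ expansion of $\E[\log_2(1/a_K + Z_K)]$ is then not directly available. I would handle this by keeping $a := a_K$ as a fixed constant and using the fact that $Z_K \xrightarrow{d} Z$ with uniform integrability of $\log(1/a + Z_K)$. The bound $|\log_2(1/a + Z_K)| \le c(1 + Z_K^\epsilon)$ still works because $1/a$ is bounded away from zero. This gives $\ergcap \to \E[\log_2(1 + aZ)]$, a continuous, strictly increasing function of $a$ that ranges over $(0,\infty)$.

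The rate constraint therefore pins $a$ to the unique root $a^\star$ of $\E[\log_2(1+aZ)] = R_0$. It follows that $P(K)\,(A_1K)^{1/m_s} \to a^\star$, i.e.\ $P(K)\sim a^\star (A_1K)^{-1/m_s}$. This route avoids needing $a_K\to\infty$ and yields the corollary with an explicit constant.
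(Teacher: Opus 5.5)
Your proposal is correct, and its core idea is the paper's: in the noise-limited regime $A\propto P^{m_s}$, so the rate depends on $(P,K)$ only through $P^{m_s}K$, and holding it fixed forces $P\propto K^{-1/m_s}$.

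The paper's proof sketch is essentially your first route. It substitutes $A\propto P^{m_s}$ into the large-$K$ capacity asymptotic of \Cref{thm:general-scaling} and holds its argument constant. As written it even contains a slip: it writes ``$\log_2(AK^{1/m_s})$'' and ``$AK^{1/m_s}=\mathrm{const}$'', which taken literally give $P\sim K^{-1/m_s^2}$; the intended invariant is $AK$.

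The paper never addresses the issue you raise. That expansion was derived for fixed $P$ with $a_K\to\infty$. Along the constant-rate path, however, $a_K=P(K)(A_1K)^{1/m_s}$ stays bounded, so $\E[\log_2(1/a_K+Z_K)]$ does not tend to $\E[\log_2 Z]$.

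Your second route is what actually justifies the corollary. It rests on the exact factorization $\gamma_{\max}^{(K)}=a_KZ_K$ with the law of $Z_K$ free of $P$, the fixed-$a$ limit $\E[\log_2(1+aZ_K)]\to\E[\log_2(1+aZ)]$, and inversion. It also has a side benefit: for bounded $a$ the integrand is bounded below, so only the upper tail of $Z_K$ needs uniform-integrability control.

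Two small fixes are needed.
\begin{itemize}
\item Make the ``pins $a$'' step explicit as a monotone sandwich. Each $f_K(a)=\E[\log_2(1+aZ_K)]$ is strictly increasing in $a$, and $\ergcap(K;P)=f_K\bigl(P(A_1K)^{1/m_s}\bigr)$ holds exactly. Since $f_K(a^\star\pm\delta)\to f(a^\star\pm\delta)$, and these limits lie strictly on opposite sides of $R_0$, the root of $f_K(a)=R_0$ eventually lies in $(a^\star-\delta,a^\star+\delta)$.
\item Drop the constant $c=2^{R_0-\E[\log_2 Z]}A_1^{-1/m_s}$ from your first route. Because $\E[\log_2(1+aZ)]>\log_2 a+\E[\log_2 Z]$, one has $a^\star<2^{R_0-\E[\log_2 Z]}$ strictly. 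That value is therefore only the large-$R_0$ approximation of the correct constant $a^\star A_1^{-1/m_s}$.
\end{itemize}
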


\begin{proof}[Proof sketch]
In the noise-limited regime, $A \hsp{-2}{-2}{=} C_h\,(P\,r_k^{-\eta}/\sigma^2)^{m_s} \hsp{-2}{-2}{\propto} P^{m_s}$.
As $\ergcap(K) \hsp{-2}{-2}{\sim} \frac{1}{m_s}\log_2(A\,K^{1/m_s})$, maintaining $\ergcap(K) \hsp{-2}{-2}{=} R_0$ requires
$A \cdot K^{1/m_s} \hsp{-2}{-2}{=} \mathrm{const}$, hence $P \hsp{-2}{-2}{\sim} K^{-1/m_s}$.
\end{proof}

\begin{remark}\label{rem:green}
In the Gumbel regime, $P(K) \sim 1/\!\log K$, only a logarithmic power reduction.
For $m_s = 2$, the Fr\'echet result gives $P(K) \sim 1/\!\sqrt{K}$: doubling the user count cuts
transmit power by approximately $30\%$. This polynomial power back-off is significant for energy-efficient
network design and can have direct implications for battery lifetime in IoT deployments.
In the interference-limited regime, this power-scaling interpretation does not apply directly:
since all transmitters share the same power~$P$, the aggregate interference $I_k \propto P$,
and the SIR $= P\,r_k^{-\eta}\,h_k / I_k$ is invariant to~$P$.
Power reduction in this regime requires heterogeneous power control or coordinated schemes
that break the proportionality between signal and interference power.
\end{remark}

\vspace{-3mm}
\subsection{Limitations and Future Work}

Several limitations of the current analysis provide avenues for future work.
First, the i.i.d.\ shadowing assumption does not capture spatial correlation among nearby users.
When users cluster into $G$ correlation groups sharing identical shadowing, the effective diversity pool
reduces to $K/G$ independent samples; the $1/m_s$ exponent is preserved but the constant~$A$ decreases,
requiring proportionally more users to achieve a given capacity target.
Quantifying this ``effective $K$'' under realistic correlation models merits further investigation.
Second, opportunistic scheduling is sensitive to feedback delays and estimation errors. Fortunately, because Fr\'echet scaling
is driven by the slow-varying shadowing component~$s_k$, it remains robust to stale estimates; fast-fading errors reduce the
scaling constant~$A$ but preserve the polynomial exponent~$1/m_s$.
However, severely limited feedback (e.g., 1-bit thresholding) restricts the effective user pool, potentially degrading the scaling law.
Notably, threshold-based feedback is particularly well-suited to the Fr\'echet domain: the power-law tail
means that extreme users exceed any fixed threshold by a wide margin, so even coarse feedback
reliably identifies the best user.

Regarding network geometry, realistic base station spatial repulsion (e.g., Ginibre or Mat\'ern processes~\cite{Deng2015Ginibre,Miyoshi2014})
yields lighter-tailed interference than the PPP. Since the fading distribution dictates the SINR tail (see \Cref{thm:sinr-tail}),
the polynomial scaling remains theoretically robust.
Practically, its onset depends on~$m_s$: polynomial growth accurately captures moderate~$K$ when $m_s \hsp{-2}{-2}{\le} 3$,
while classical Gumbel approximations suit mild shadowing ($m_s \hsp{-2}{-2}{\ge} 5$).
Finally, temporal correlation in slow-fading reduces the effective diversity order,
requiring the scaling to be evaluated relative to the number of independent coherence blocks.

\vspace{-3mm}
\section{Conclusion}\label{sec:conclusion}

In this work, we demonstrated that the classical $\log_2\!\log K$ multi-user diversity scaling law is an artifact of
exponential-tailed (Rayleigh/Nakagami) fading models. By modeling composite fading via the Fisher--Snedecor $\mathcal{F}$ distribution
(which accounts for both Nakagami-$m$ multi-path fading and inverse-Gamma shadowing), we showed that the channel power exhibits a power-law tail with index $m_s$.
In interference-limited Poisson networks, the SINR inherits this heavy-tailed property, causing the maximum SINR among $K$ users to fall into
the Fr\'{e}chet domain of attraction.

The resulting scaling is polynomial, defined by $\gamma_{\max}^{(K)} \hsp{-2}{-2}{\sim} K^{1/m_s}$, which yields an ergodic capacity scaling of $\frac{1}{m_s}\log_2 K$.
For the canonical case of $m_s \hsp{-2}{-2}{=} 2$, this translates to a $\sqrt{K}$ SINR scaling and $\frac{1}{2}\log_2 K$ capacity growth.
These findings suggest that multi-user diversity is a significantly more potent resource in dense, interference-limited networks than previously appreciated,
provided the scheduler is designed to exploit heavy-tailed fluctuations.

Numerical results confirm our analytical predictions across user counts from $K \hsp{-2}{-2}{=} 10$ to $5000$,
validating the Fr\'{e}chet scaling and convergence. Furthermore, our extensions provide a robust toolkit for next-generation network design.
The MIMO random beamforming analysis (\Cref{prop:mimo-rb}) establishes a sum-rate scaling of $(M/m_s)\log_2 K$,
while the energy efficiency results (\Cref{cor:power-scaling}) show that transmit power can be reduced polynomially as $K^{-1/m_s}$ -
a more aggressive reduction than the $1/\ln K$ back-off possible in the Gumbel regime.
Coupled with the crossover criteria in \Cref{subsec:transition}, these results offer a comprehensive framework for harnessing heavy-tailed multi-user diversity in future deployments.

\vspace{-2mm}
\bibliographystyle{unsrt}
\bibliography{references}

@article{Suzuki1977,
  author  = {H. Suzuki},
  title   = {A statistical model for urban radio propagation},
  journal = {IEEE Trans. Commun.},
  volume  = {25},
  number  = {7},
  pages   = {673--680},
  month   = {Jul.},
  year    = {1977}
}

@inproceedings{Knopp1995,
  author    = {R. Knopp and P. A. Humblet},
  title     = {Information capacity and power control in single-cell multiuser communications},
  booktitle = {Proc. IEEE ICC},
  address   = {Seattle, WA},
  month     = {Jun.},
  year      = {1995},
  pages     = {331--335}
}

@article{Viswanath2002,
  author  = {P. Viswanath and D. N. C. Tse and R. Laroia},
  title   = {Opportunistic beamforming using dumb antennas},
  journal = {IEEE Trans. Inf. Theory},
  volume  = {48},
  number  = {6},
  pages   = {1277--1294},
  month   = {Jun.},
  year    = {2002}
}

@article{Sharif2005,
  author  = {M. Sharif and B. Hassibi},
  title   = {On the capacity of {MIMO} broadcast channels with partial side information},
  journal = {IEEE Trans. Inf. Theory},
  volume  = {51},
  number  = {2},
  pages   = {506--522},
  month   = {Feb.},
  year    = {2005}
}

@article{Sharif2007,
  author  = {M. Sharif and B. Hassibi},
  title   = {A comparison of time-sharing, {DPC}, and beamforming for {MIMO} broadcast channels with many users},
  journal = {IEEE Trans. Commun.},
  volume  = {55},
  number  = {1},
  pages   = {11--15},
  month   = {Jan.},
  year    = {2007}
}

@article{Yoo2017,
  author  = {S. K. Yoo and S. L. Cotton and P. C. Sofotasios and M. Matthaiou and M. Valkama and G. K. Karagiannidis},
  title   = {The {Fisher}--{Snedecor} $\mathcal{F}$ distribution: A simple and accurate composite fading model},
  journal = {IEEE Commun. Lett.},
  volume  = {21},
  number  = {7},
  pages   = {1661--1664},
  month   = {Jul.},
  year    = {2017}
}

@article{Badarneh2018,
  author  = {O. S. Badarneh and D. B. da Costa and P. C. Sofotasios and S. Muhaidat and S. L. Cotton},
  title   = {On the sum of {Fisher}--{Snedecor} $\mathcal{F}$ variates and its application to maximal-ratio combining},
  journal = {IEEE Wireless Commun. Lett.},
  volume  = {7},
  number  = {6},
  pages   = {966--969},
  month   = {Dec.},
  year    = {2018}
}

@article{Cotton2014,
  author  = {S. L. Cotton},
  title   = {Human body shadowing in cellular device-to-device communications: Channel modeling using the shadowed $\kappa$--$\mu$ fading model},
  journal = {IEEE J. Sel. Areas Commun.},
  volume  = {33},
  number  = {1},
  pages   = {111--119},
  month   = {Jan.},
  year    = {2015}
}

@article{Andrews2011,
  author  = {J. G. Andrews and F. Baccelli and R. K. Ganti},
  title   = {A tractable approach to coverage and rate in cellular networks},
  journal = {IEEE Trans. Commun.},
  volume  = {59},
  number  = {11},
  pages   = {3122--3134},
  month   = {Nov.},
  year    = {2011}
}

@article{Baccelli2009,
  author  = {F. Baccelli and B. B\l{}aszczyszyn},
  title   = {Stochastic geometry and wireless networks, Volume I: Theory},
  journal = {Found. Trends Netw.},
  volume  = {3},
  number  = {3--4},
  pages   = {249--449},
  year    = {2009}
}

@article{Haenggi2009,
  author  = {M. Haenggi and J. G. Andrews and F. Baccelli and O. Dousse and M. Franceschetti},
  title   = {Stochastic geometry and random graphs for the analysis and design of wireless networks},
  journal = {IEEE J. Sel. Areas Commun.},
  volume  = {27},
  number  = {7},
  pages   = {1029--1046},
  month   = {Sep.},
  year    = {2009}
}

@article{Win2009,
  author  = {M. Z. Win and P. C. Pinto and L. A. Shepp},
  title   = {A mathematical theory of network interference and its applications},
  journal = {Proc. IEEE},
  volume  = {97},
  number  = {2},
  pages   = {205--230},
  month   = {Feb.},
  year    = {2009}
}

@article{ElSawy2017,
  author  = {H. ElSawy and A. Sultan-Salem and M. S. Alouini and M. Z. Win},
  title   = {Modeling and analysis of cellular networks using stochastic geometry: A tutorial},
  journal = {IEEE Commun. Surveys Tuts.},
  volume  = {19},
  number  = {1},
  pages   = {167--203},
  year    = {2017}
}

@article{Schilcher2016,
  author  = {U. Schilcher and C. Bettstetter and M. Haenggi},
  title   = {Interference functionals in {Poisson} networks},
  journal = {IEEE Trans. Inf. Theory},
  volume  = {62},
  number  = {1},
  pages   = {370--383},
  month   = {Jan.},
  year    = {2016}
}

@book{DeHaan2006,
  author    = {L. de Haan and A. Ferreira},
  title     = {Extreme Value Theory: An Introduction},
  publisher = {Springer},
  address   = {New York, NY},
  year      = {2006}
}

@book{Resnick2007,
  author    = {S. I. Resnick},
  title     = {Heavy-Tail Phenomena: Probabilistic and Statistical Modeling},
  publisher = {Springer},
  address   = {New York, NY},
  year      = {2007}
}

@article{Breiman1965,
  author  = {L. Breiman},
  title   = {On some limit theorems similar to the arc-sin law},
  journal = {Theory Probab. Appl.},
  volume  = {10},
  number  = {2},
  pages   = {323--331},
  year    = {1965}
}

@article{AlBadarneh2018,
  author  = {Y. Al-Badarneh and C. N. Georghiades and S. Alhussein},
  title   = {Asymptotic performance analysis of the $k$-th best link selection over wireless fading channels: An extreme value theory approach},
  journal = {IEEE Trans. Veh. Technol.},
  volume  = {67},
  number  = {7},
  pages   = {6652--6657},
  month   = {Jul.},
  year    = {2018}
}

@article{AlAhmadi2010,
  author  = {A. S. Al-Ahmadi and H. Yanikomeroglu},
  title   = {On the approximation of the generalized-$K$ distribution by a gamma distribution for modeling composite fading channels},
  journal = {IEEE Trans. Wireless Commun.},
  volume  = {9},
  number  = {2},
  pages   = {706--713},
  month   = {Feb.},
  year    = {2010}
}

@article{Alouini1999,
  author  = {M.-S. Alouini and A. J. Goldsmith},
  title   = {Capacity of {Rayleigh} fading channels under different adaptive transmission and diversity-combining techniques},
  journal = {IEEE Trans. Veh. Technol.},
  volume  = {48},
  number  = {4},
  pages   = {1165--1181},
  month   = {Jul.},
  year    = {1999}
}

@article{CMS1976,
  author  = {J. M. Chambers and C. L. Mallows and B. W. Stuck},
  title   = {A method for simulating stable random variables},
  journal = {J. Amer. Statist. Assoc.},
  volume  = {71},
  number  = {354},
  pages   = {340--344},
  month   = {Jun.},
  year    = {1976}
}

@book{Billingsley1999,
  author    = {P. Billingsley},
  title     = {Convergence of Probability Measures},
  edition   = {2nd},
  publisher = {Wiley},
  address   = {New York, NY},
  year      = {1999}
}

@book{Leadbetter1983,
  author    = {M. R. Leadbetter and G. Lindgren and H. Rootz\'{e}n},
  title     = {Extremes and Related Properties of Random Sequences and Processes},
  publisher = {Springer},
  address   = {New York, NY},
  year      = {1983}
}

@book{Embrechts1997,
  author    = {P. Embrechts and C. Kl\"{u}ppelberg and T. Mikosch},
  title     = {Modelling Extremal Events for Insurance and Finance},
  publisher = {Springer},
  address   = {Berlin},
  year      = {1997}
}

@book{Haenggi2012,
  author    = {M. Haenggi},
  title     = {Stochastic Geometry for Wireless Networks},
  publisher = {Cambridge University Press},
  address   = {Cambridge, UK},
  year      = {2012}
}

@article{Dhillon2012,
  author  = {H. S. Dhillon and R. K. Ganti and F. Baccelli and J. G. Andrews},
  title   = {Modeling and analysis of {K}-tier downlink heterogeneous cellular networks},
  journal = {IEEE J. Sel. Areas Commun.},
  volume  = {30},
  number  = {3},
  pages   = {550--560},
  month   = {Apr.},
  year    = {2012}
}

@article{Gesbert2004,
  author  = {D. Gesbert and M.-S. Alouini},
  title   = {How much feedback is multi-user diversity really worth?},
  journal = {Proc. IEEE Int. Conf. Commun. (ICC)},
  pages   = {234--238},
  month   = {Jun.},
  year    = {2004}
}

@article{Yoo2019,
  author  = {S. K. Yoo and P. C. Sofotasios and S. L. Cotton and S. Muhaidat and F. J. Lopez-Martinez and J. M. Romero-Jerez and G. K. Karagiannidis},
  title   = {A comprehensive analysis of the achievable channel capacity in $\mathcal{F}$ composite fading channels},
  journal = {IEEE Access},
  volume  = {7},
  pages   = {34078--34094},
  year    = {2019}
}

@article{Miyoshi2014,
  author  = {N. Miyoshi and T. Shirai},
  title   = {A cellular network model with {Ginibre} configured base stations},
  journal = {Adv. Appl. Probab.},
  volume  = {46},
  number  = {3},
  pages   = {832--845},
  year    = {2014}
}

@article{Deng2015Ginibre,
  author  = {N. Deng and W. Zhou and M. Haenggi},
  title   = {The {Ginibre} point process as a model for wireless networks with repulsion},
  journal = {IEEE Trans. Wireless Commun.},
  volume  = {14},
  number  = {1},
  pages   = {107--121},
  month   = {Jan.},
  year    = {2015}
}

\end{document}